\newtheorem{theorem}{Theorem}
\newtheorem{definition}{Definition}
\newtheorem{lemma}{Lemma}
\newtheorem{assumption}[definition]{Assumption}
\newcommand{\cancel}[1]{}
\newcommand{\bool}{\{0,1\}}
\def\F{\mathbb{F}}
\def \bf{\textbf}
\def \mc{\mathcal}
\def \mb{\mathbf}
\def \ms{\mathsf}
\def \mcA{\mathcal{A}}
\def \negl{\ms{negl}}
\def \pk{\ms{pk}}
\def \sk{\ms{sk}}
\def \lar{\leftarrow}
\begin{document}

\title{A CCA2 Secure Variant of the McEliece Cryptosystem}

\author{Nico D\"{o}ttling, Rafael~Dowsley, J\"{o}rn~M\"{u}ller-Quade and Anderson~C.~A.~Nascimento

\thanks{Rafael~Dowsley is with the Department of Computer Science and Engineering, University of California at San Diego (UCSD), 9500 Gilman Drive, La Jolla, California 92093, USA. Email: rdowsley@cs.ucsd.edu. This work was partially done while the author was with the Department of Electrical Engineering, University of Brasilia. Supported in part by NSF grant CCF-0915675 and by a Focht-Powell fellowship.}
\thanks{Nico D\"{o}ttling and J\"{o}rn~M\"{u}ller-Quade are with the Institute Cryptography and Security, Karlsruhe Institute of Technology. Am Fasanengarten 5, 76128 Karlsruhe, Germany. E-mail: \{ndoett,muellerq\}@ira.uka.de}
\thanks{Anderson~C.~A.~Nascimento is with the Department of Electrical Engineering, University of Brasilia. Campus Universit\'{a}rio Darcy Ribeiro, Bras\'{i}lia, CEP: 70910-900, Brazil. E-mail: andclay@ene.unb.br.}
\thanks{A preliminary version of this work, enciphering just a single message rather than many possibly correlated ones, has appeared at the proceedings of  CT-RSA -- 2009~\cite{DMN09}.}
}

\markboth{}%
{}
\maketitle

\begin{abstract}
The McEliece public-key encryption scheme has become an interesting alternative to cryptosystems based on number-theoretical problems. Differently from RSA and ElGamal, McEliece PKC is not known to be broken by a quantum computer. Moreover, even tough McEliece PKC has a relatively big key size, encryption and decryption operations are rather efficient. 
In spite of all the recent results in coding theory based cryptosystems, to the date, there are no constructions secure against chosen ciphertext attacks in the standard model -- the \emph{de facto} security notion for public-key cryptosystems. 

In this work, we show the first construction of a McEliece based public-key cryptosystem secure against chosen ciphertext attacks in the standard model. Our construction is inspired by a recently proposed technique by Rosen and Segev.
\end{abstract}

\begin{IEEEkeywords}
Public-key encryption, CCA2 security, McEliece assumptions, standard model
\end{IEEEkeywords}
\IEEEpeerreviewmaketitle

\section{Introduction}

Indistinguishability of messages under adaptive chosen ciphertext attacks is one of the strongest known notions of security for public-key encryption schemes (PKE). Many computational assumptions have been used in the literature for obtaining cryptosystems meeting such a strong security notion. Given one-way trapdoor permutations, we know how to obtain CCA2 security from any semantically secure public-key cryptosystem \cite{NY89,Sah99,Lin03}. Efficient constructions are also known based on number-theoretic assumptions \cite{CS98} or on identity based encryption schemes \cite{CHK04}. Obtaining a CCA2 secure cryptosystem (even an inefficient one) based on the McEliece assumptions in the standard model has been an open problem in this area for quite a while. We note, however, that secure schemes in the random oracle model have been proposed in \cite{KI01}.

Recently, Rosen and Segev proposed an elegant and simple new computational assumption for obtaining CCA2 secure PKEs: \emph{correlated products} \cite{RS09}. They provided constructions of correlated products based on the existence of certain \emph{lossy trapdoor functions} \cite{PW08} which in turn can be based on the decisional Diffie-Hellman problem and on Paillier's decisional residuosity problem \cite{PW08}.

In this paper, we show that ideas similar to those of Rosen and Segev can be applied for obtaining an efficient construction of a CCA2 secure PKE built upon the McEliece assumption. Inspired by the definition of correlated products \cite{RS09}, we define a new kind of PKE called $k$-repetition CPA secure cryptosystem and provide an adaptation of the construction proposed in \cite{RS09} to this new scenario. Such cryptosystems can be constructed from very weak (one-way CPA secure) PKEs and randomized encoding functions. In contrast, Rosen and Segev give a more general, however less efficient, construction of correlated secure trapdoor functions from lossy trapdoor functions. We show directly that a randomized version of the McEliece cryptosystem \cite{NIKM07} is $k$-repetition CPA secure and obtain a CCA2 secure scheme in the standard model. The resulting cryptosystem encrypts many bits as opposed to the single-bit PKE obtained in \cite{RS09}. We expand the public and secret-keys and the ciphertext by a factor of $k$ when  compared to the original McEliece PKE. 

In a concurrent and independent work~\cite{GV08}, Goldwasser and Vaikuntanathan proposed a new CCA2 secure public-key encryption scheme based on lattices using the construction by Rosen and Segev. Their scheme assumed that the problem of
learning with errors (LWE) is hard~\cite{Reg05}.

A direct construction of correlated products based on  McEliece and Niederreiter  PKEs has been obtained by Persichetti~\cite{Per12} in a subsequent work.

\section{Preliminaries}

\subsection{Notation}

If $x$ is a string, then $|x|$ denotes its length, while $|S|$ represents the cardinality of a set $S$. If $n \in \mathbb{N}$ then $1^n$ denotes the string of $n$ ones. $s \leftarrow S$ denotes the operation of choosing an element $s$ of a set $S$ uniformly at random. $w \leftarrow \mathcal{A}(x,y,\ldots)$ represents the act of running the algorithm $\mathcal{A}$ with inputs $x, y, \ldots$ and producing output $w$. We write $w \leftarrow \mathcal{A}^{\mathcal{O}}(x,y,\ldots)$ for representing an algorithm $\mathcal{A}$ having access to an oracle $\mathcal{O}$. We denote by Pr[$E$] the probability that the event $E$ occurs. If $a$ and $b$ are two strings of bits or two matrices, we denote by $a|b$ their concatenation. The transpose of a matrix $M$ is $M^T$. If $a$ and $b$ are two strings of bits, we denote by $\langle a,b \rangle$ their dot product modulo $2$ and by $a \oplus b$ their bitwise XOR. $\mathcal{U}_n$ is an oracle that returns an uniformly random element of $\bool^n$. 

We use the notion of randomized encoding-function for functions $\ms{E}$ that take an input $\ms{m}$ and random coins $\ms{s}$ and output a randomized representation $\ms{E}(\ms{m};\ms{s})$ from which $\ms{m}$ can be recovered using a decoding-function $\ms{D}$. We will use such randomized encoding-functions to make messages entropic or unguessable.

\subsection{Public-Key Encryption Schemes}

A Public-Key Encryption Scheme ($\mathsf{PKE}$) is defined as follows:

\begin{definition}(Public-Key Encryption). A public-key encryption scheme is a triplet of algorithms $\mathsf{(Gen}$, $\mathsf{Enc}$, $\mathsf{Dec)}$ such that:

\begin{itemize}
\item $\mathsf{Gen}$ is a probabilistic polynomial-time key generation algorithm which takes as input a security parameter $1^n$ and outputs a public-key $\mathsf{pk}$ and a secret-key $\mathsf{sk}$. The public-key specifies the message space $\mathcal{M}$ and the ciphertext space $\mathcal{C}$.
\item $\mathsf{Enc}$ is a (possibly) probabilistic polynomial-time encryption algorithm which receives as input a public-key $\mathsf{pk}$, a message $\mathsf{m} \in \mathcal{M}$ and random coins $\ms{r}$, and outputs a ciphertext $\mathsf{c} \in \mathcal{C}$. We write $\ms{Enc}(\ms{pk},\ms{m};\ms{r})$ to indicate explicitly that the random coins $\ms{r}$ are used and $\ms{Enc}(\ms{pk},\ms{m})$ if fresh random coins are used.
\item $\mathsf{Dec}$ is a deterministic polynomial-time decryption algorithm which takes as input a secret-key $\mathsf{sk}$ and a ciphertext $\mathsf{c}$, and outputs either a message $\mathsf{m} \in \mathcal{M}$ or an error symbol $\perp$.
\item (Completeness) For any pair of public and secret-keys generated by $\mathsf{Gen}$ and any message $\mathsf{m} \in \mathcal{M}$ it holds that $\mathsf{Dec}(\mathsf{sk},\mathsf{Enc}(\mathsf{pk},\mathsf{m};\ms{r}))=\mathsf{m}$ with overwhelming probability over the randomness used by $\mathsf{Gen}$ and the random coins $\ms{r}$ used by $\mathsf{Enc}$.
\end{itemize}
\end{definition}

A basic security notion for public-key encryption schemes is One-Wayness under chosen-plaintext attacks (OW-CPA). This notion states that every PPT-adversary $\mc{A}$, given a public-key $\ms{pk}$ and a ciphertext $\ms{c}$ of a uniformly chosen message $\ms{m} \in \mc{M}$, has only negligible probability of recovering the message $\ms{m}$ (The probability runs over the random coins used to generate the public and secret-keys, the choice of $\ms{m}$ and the coins of $\mc{A}$).

Below we define the standard security notions for public-key encryption schemes, namely, indistinguishability against chosen-plaintext attacks (IND-CPA)~\cite{GM84} and against adaptive chosen-ciphertext attacks (IND-CCA2)~\cite{RS91}. Our game definition follows the approach of~\cite{HK07}.

\begin{definition} (IND-CPA security). To a two-stage adversary $\mathcal{A} = (\mathcal{A}_1, \mathcal{A}_2)$ against $\mathsf{PKE}$ we associate the following experiment.

\smallskip
\framebox{
\begin{minipage}[t]{3.0in}
$\mathsf{Exp}^{cpa}_{\mathsf{PKE},\mathcal{A}}(n)$:

$(\mathsf{pk},\mathsf{sk}) \leftarrow \mathsf{Gen}(1^n)$

$(\mathsf{m}^0, \mathsf{m}^1, state) \leftarrow \mathcal{A}_1(\mathsf{pk})$ s.t. $|\mathsf{m}^0|=|\mathsf{m}^1|$

$b \leftarrow \bool$

$\mathsf{c}^* \leftarrow \mathsf{Enc}(\mathsf{pk},\mathsf{m}^b)$

$b' \leftarrow \mathcal{A}_2(\mathsf{c}^*, state)$

If $b=b'$ return $1$, else return $0$.
\end{minipage}}\\

We define the advantage of $\mathcal{A}$ in the experiment as

\[\mathsf{Adv}^{cpa}_{\mathsf{PKE},\mathcal{A}}(n)=\left|Pr\left[
\mathsf{Exp}^{cpa}_{\mathsf{PKE},\mathcal{A}}(n)=1\right]-\frac{1}{2}\right|
\]

We say that $\mathsf{PKE}$ is indistinguishable against chosen-plaintext attacks (IND-CPA) if for all probabilistic polynomial-time (PPT) adversaries $\mathcal{A} = (\mathcal{A}_1, \mathcal{A}_2)$ the advantage of $\mathcal{A}$ in the above experiment is a negligible function of $n$.

\end{definition}

\begin{definition} (IND-CCA2 security). To a two-stage adversary $\mathcal{A} = (\mathcal{A}_1, \mathcal{A}_2)$ against $\mathsf{PKE}$ we associate the following experiment.

\smallskip
\framebox{
\begin{minipage}[t]{3.0in}
$\mathsf{Exp}^{cca2}_{\mathsf{PKE},\mathcal{A}}(n)$:

$(\mathsf{pk},\mathsf{sk}) \leftarrow \mathsf{Gen}(1^n)$

$(\mathsf{m}^0, \mathsf{m}^1, state) \leftarrow \mathcal{A}_1^{\mathsf{Dec}(\mathsf{sk},\cdotp)}(\mathsf{pk})$ s.t. $|\mathsf{m}^0|=|\mathsf{m}^1|$

$b \leftarrow \bool$

$\mathsf{c}^* \leftarrow \mathsf{Enc}(\mathsf{pk},\mathsf{m}^b)$

$b' \leftarrow \mathcal{A}_2^{\mathsf{Dec}(\mathsf{sk},\cdotp)}(\mathsf{c}^*, state)$

If $b=b'$ return $1$, else return $0$.
\end{minipage}}\\

The adversary $\mathcal{A}_2$ is not allowed to query $\mathsf{Dec}(\mathsf{sk},\cdotp)$ with $\mathsf{c}^*$. We define the advantage of $\mathcal{A}$ in the experiment as

\[\mathsf{Adv}^{cca2}_{\mathsf{PKE},\mathcal{A}}(n)=\left|Pr\left[
\mathsf{Exp}^{cca2}_{\mathsf{PKE},\mathcal{A}}(n)=1\right]-\frac{1}{2}\right|
\]

We say that $\mathsf{PKE}$ is indistinguishable against adaptive chosen-ciphertext attacks (IND-CCA2) if for all probabilistic polynomial-time (PPT) adversaries $\mathcal{A} = (\mathcal{A}_1, \mathcal{A}_2)$ that make a polynomial number of oracle queries the advantage of $\mathcal{A}$ in the experiment is a negligible function of $n$.

\end{definition}

\subsection{McEliece Cryptosystem}

In this Section we define the basic McEliece cryptosystem~\cite{McE78}, following \cite{Sen10} and \cite{NIKM07}. Let $\mc{F}_{n,t}$ be a family of binary linear error-correcting codes given by two parameters $n$ and $t$. Each code $C \in \mc{F}_{n,t}$ has code length $n$ and minimum distance greater than $2t$. We further assume that there exists an efficient probabilistic algorithm $\ms{Generate}_{n,t}$ that samples a code $C \in \mc{F}_{n,t}$ represented by a generator-matrix $\mb{G}_C$ of dimensions $l \times n$ together with an efficient decoding procedure $\ms{Decode}_C$ that can correct up to $t$ errors. 

The McEliece PKE consists of a triplet of probabilistic algorithms $(\mathsf{Gen}_\mathrm{McE},$ $\mathsf{Enc}_\mathrm{McE}, \mathsf{Dec}_\mathrm{McE})$ such that:

\begin{itemize}
\item The probabilistic polynomial-time key generation algorithm $\mathsf{Gen}_\mathrm{McE}$, computes $(\mb{G}_C,\ms{Decode}_C) \lar \ms{Generate}_{n,t}()$, sets $\pk = \mb{G}_C$ and $\sk = \ms{Decode}_C$ and outputs $(\pk,\sk)$.

\item The probabilistic polynomial-time encryption algorithm $\mathsf{Enc}_\mathrm{McE}$, takes the public-key $\pk = \mb{G}_C$ and a plaintext $\mathsf{m}\in \F_2^l$ as input and outputs a ciphertext $\mathsf{c}=\ms{m}\mb{G}_C\oplus \ms{e}$, where $\ms{e}\in\bool^n$ is a random vector of Hamming-weight $t$.

\item The deterministic polynomial-time decryption algorithm $\mathsf{Dec}_\mathrm{McE}$, takes the secret-key $\sk = \ms{Decode}_C$ and a ciphertext $\ms{c} \in \F_2^n$, computes $\ms{m} = \ms{Decode}_C(\ms{c})$ and outputs $\ms{m}$.
\end{itemize}

This basic variant of the McEliece cryptosystem is OW-CPA secure (for a proof see \cite{Sen10} Proposition 3.1), given that matrices $\mb{G}_C$ generated by $\ms{Generate}_{n,t}$ are pseudorandom (Assumption \ref{as:ind} below) and decoding random linear codes is hard when the noise vector has hamming weight $t$.

There exist several optimization for the basic scheme, mainly improving the size of the public-key. Biswas and Sendrier \cite{BS08} show that the public generator-matrix $\mb{G}$ can be reduced to row echelon form, reducing the size of the public-key from $l\cdot n$ to $l \cdot (n - l)$ bits. However, we cannot adopt this optimization into our scheme of section \ref{sec:rme}\footnote{Neither is it possible for the scheme of \cite{NIKM07}, on which our $k$-repetition McEliece scheme is based upon.}, as it implies a simple attack compromising IND-CPA security\footnote{The scheme of \cite{NIKM07} encrypts by computing $\ms{c} = (\ms{m} | \ms{s} )\cdot \mb{G} \oplus \ms{e})$. If $\mb{G}$ is in row-echelon form, $\ms{m} \oplus \ms{e}^\prime$ is a prefix of $\ms{c}$, where $\ms{e}^\prime$ is a prefix of $\ms{e}$. Thus an IND-CPA adversary can distinguish between the encryptions of two plaintexts $\ms{m}_0$ and $\ms{m}_1$ by checking whether the prefix of $\ms{c}^\ast$ is closer to $\ms{m}_0$ or $\ms{m}_1$.} (whereas \cite{BS08} prove OW-CPA security).

In this work we use a slightly modified version of the basic McEliece PKE scheme. Instead of sampling an error vector $\ms{e}$ by choosing it randomly from the set of vectors with Hamming-weight $t$, we generate $\ms{e}$ by choosing each of its bits according to the Bernoulli distribution $\mc{B}_{\theta}$ with parameter $\theta=\frac{t}{n}-\epsilon$ for some $\epsilon > 0$. Clearly, a simple argument based on the Chernoff bound gives us that the resulting error vector should be within the error capabilities of the code but for a negligible probability in $n$. The reason for using this error-distribution is that one of our proofs utilizes the fact that the concatenation $\ms{e}_1 | \ms{e}_2$ of two Bernoulli-distributed vectors $\ms{e}_1$ and $\ms{e}_2$ is again Bernoulli distributed. Clearly, it is not the case that $\ms{e}_1 | \ms{e}_2$ is a uniformly chosen vector of Hamming-weight $2t$ if each $\ms{e}_1$ and $\ms{e}_2$ are uniformly chosen with Hamming-weight $t$.

Using the Bernoulli error-distribution, we base the security of our scheme on the pseudorandomness of the McEliece matrices $\mb{G}$ and the pseudorandomness of the learning parity with noise (LPN) problem (see below).

\subsection{McEliece Assumptions and Attacks}

In this subsection, we discuss the hardness assumptions for the McEliece cryptosystem. Let $\mc{F}_{n,t}$ be a family of codes together with a generation-algorithm $\ms{Generate}_{n,t}$ as above and let $\mb{G}_C$ be the corresponding generator-matrices. An adversary can attack the McEliece cryptosystem in two ways: either he can try to discover the underlying structure which would allow him to decode efficiently or he can try to run a generic decoding algorithm. This high-level intuition that there are two different ways of attacking the cryptosystem can be formalized~\cite{Sen10}. Accordingly, the security of the cryptosystem is based on two security assumptions. 

The first assumption states that for certain families $\mc{F}_{n,t}$, the distribution of generator-matrices $\mb{G}_C$ output by $\ms{Generate}_{n,t}$ is pseudorandom. Let $l$ be the dimension of the codes in $\mc{F}_{n,t}$.

\begin{assumption}\label{as:ind}
Let $\mb{G}_C$ be distributed by $(\mb{G}_C,\ms{Decode}_C) \lar \ms{Generate}_{n,t}()$ and $\mb{R}$ be distributed by $\mb{R} \lar \mc{U}(\F_2^{k \times n})$. For every PPT algorithm $\mc{A}$ it holds that 
\[
| \Pr[\mc{A}(\mb{G}_C) = 1 ] - \Pr[\mc{A}(\mb{R}) = 1 ] | < \negl(n).
\]
\end{assumption}

In the classical instantiation of the McEliece cryptosystem, $\mc{F}_{n,t}$ is chosen to be the family of irreducible binary Goppa-codes of length $n = 2^m$ and dimension $l = n - tm$. For this instantiation, an efficient distinguisher was built for the case of high-rate codes~\cite{FGOPT11, FOPT10} (i.e., codes where the rate are very close to 1). But, for codes that do not have a high-rate, no generalization of the previous distinguisher is known and
the best known attacks~ \cite{CFS01,LS01} are based on the {\em support splitting algorithm}~\cite{Sen00} and have exponential runtime. Therefore, one should be careful when choosing the parameters of the Goppa-codes, but for encryption schemes it is possible to use codes that do not have high-rate.


The second security assumption is the difficulty of the \emph{decoding problem} (a classical problem in coding theory), or equivalently, the difficulty of the \emph{learning parity with noise} (LPN) problem (a classical problem in learning theory). The best known algorithms for decoding a random linear code are based on the \emph{information set decoding} technique~\cite{LB88,Leo88,Ste89}. Over the years, there have been improvements in the running time~\cite{CC98, BLP08,FS09,BLP11,MMT11,BJMM12}, but the best algorithms still run in exponential time.

Below we give the definition of LPN problem following the description of~\cite{NIKM07}.

\begin{definition}\label{def:slpn}(LPN search problem). Let $s$ be a random binary string of length $l$. We consider the Bernoulli distribution $\mathcal{B}_{\theta}$ with parameter $\theta \in (0,\frac{1}{2})$. Let $\mathcal{Q}_{s,\theta}$ be the following distribution:

\begin{center}
$\{(a,\langle s,a\rangle \oplus e)|a \leftarrow \bool^l,e \leftarrow \mathcal{B}_{\theta}\}$
\end{center}

For an adversary $\mathcal{A}$ trying to discover the random string $s$, we define its advantage as:

\begin{center}
$\mathsf{Adv}_{\mathsf{LPN}_{\theta},\mathcal{A}}(l)= \mathrm{Pr[} \mathcal{A}^{\mathcal{Q}_{s,\theta}} = s|s \leftarrow \bool^l \mathrm{]}$
\end{center}

The $\mathsf{LPN}_{\theta}$ problem with parameter $\theta$ is hard if the advantage of all PPT adversaries $\mathcal{A}$ that make a polynomial number of oracle queries is negligible.
\end{definition}

Katz and Shin \cite{KS06} introduce a distinguishing variant of the $\ms{LPN}$-problem, which is more useful in the context of encryption schemes.

\begin{definition}(LPNDP, LPN distinguishing problem). Let $s,a$ be binary strings of length $l$. Let further $\mathcal{Q}_{s,\theta}$ be as in Definition \ref{def:slpn}. Let $\mcA$ be a PPT-adversary. The distinguishing-advantage of $\mcA$ between $\mathcal{Q}_{s,\theta}$ and the uniform distribution $\mc{U}_{l+1}$ is defined as
\begin{align*}
&\ms{Adv}_{\ms{LPNDP_\theta},\mcA}(l) = \\
&\left| \mathrm{Pr} \left[ \mathcal{A}^{\mathcal{Q}_{s,\theta}}=1|s \leftarrow \bool^{l} \right]- \mathrm{Pr} \left[ \mathcal{A}^{\mathcal{U}_{l+1}}=1\right] \right|
\end{align*}
The $\mathsf{LPNDP}_{\theta}$ with parameter $\theta$ is hard if the advantage of all PPT adversaries $\mathcal{A}$ is negligible.
\end{definition}

Further, \cite{KS06} show that the $\ms{LPN}$-distinguishing problem is as hard as the $\ms{LPN}$ search-problem with similar parameters.

\begin{lemma}(\cite{KS06})\label{lem_LPN}
Say there exists an algorithm $\mathcal{A}$ making $q$ oracle queries, running in time $t$, and such that
\[
\ms{Adv}_{\ms{LPNDP}_\theta,\mcA}(l) \geq \delta
\]
Then there exists an adversary $\mathcal{A}'$ making $q'=O(q\delta^{-2}\mathrm{log}l)$ oracle queries, running in time $t'=O(t l\delta^{-2}\mathrm{log}l)$, and such that
\[
\mathsf{Adv}_{\mathsf{LPN}_{\theta},\mathcal{A}'}(l) \geq \frac{\delta}{4}
\]
\end{lemma}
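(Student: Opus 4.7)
The plan is to upgrade the distinguishing advantage of $\mcA$ into a bit-by-bit recovery of $s$. For each index $i \in \{1,\dots,l\}$ and each candidate bit $g \in \bool$ I would design an efficient sample-transformation $T_{i,g}$ that turns one $\mathcal{Q}_{s,\theta}$-sample into one sample whose distribution equals $\mathcal{Q}_{s,\theta}$ when $g = s_i$ and equals $\mathcal{U}_{l+1}$ when $g \neq s_i$. Feeding the transformed samples to $\mcA$ and reading off its output bit then lets me test which value of $g$ equals $s_i$; a Chernoff-style amplification drives the per-bit failure probability below $1/l$, after which a union bound recovers all of $s$.

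Concretely, on a raw sample $(a,\, b = \langle s,a\rangle \oplus e)$ the transformation $T_{i,g}$ draws a fresh uniform bit $r \lar \bool$ and outputs
\[
(a', b') \bydef \bigl(\, a \oplus (a_i \oplus r)\,\mathbf{1}_i,\; b \oplus g\,(a_i \oplus r)\, \bigr),
\]
where $\mathbf{1}_i$ is the $i$-th standard basis vector, so that the $i$-th coordinate of $a'$ is replaced by the fresh bit $r$. A direct calculation gives
\[
b' \,=\, \langle s,a'\rangle \,\oplus\, (s_i \oplus g)(a_i \oplus r) \,\oplus\, e.
\]
If $g = s_i$ this collapses to $b' = \langle s,a'\rangle \oplus e$, so $(a',b')$ is a fresh $\mathcal{Q}_{s,\theta}$-sample. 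If $g \neq s_i$ the extra term $a_i \oplus r$ is a uniform bit independent of $a'$ and of $e$, making $b'$ uniform and independent of $a'$; hence $(a', b')$ is distributed as $\mathcal{U}_{l+1}$.

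With this gadget, $\mcA'$ runs $N = \Theta(\delta^{-2}\log l)$ amplification rounds; in round $j$ it draws one fresh batch of $q$ oracle samples and, for every pair $(i,g)$, simulates $\mcA$ on the batch mapped through $T_{i,g}$, recording the output bit $X_{i,g,j}$. For each $i$ it sets $s_i$ to the guess $g$ whose empirical mean $\frac{1}{N}\sum_j X_{i,g,j}$ is farther from $\Pr_{\mathcal{U}_{l+1}}[\mcA = 1]$, which is pre-estimated to accuracy $\delta/8$ from freshly-sampled uniform pairs (the oracle is not needed to do this). A Hoeffding bound combined with the $\delta$-advantage of $\mcA$ yields per-bit correctness with probability $1 - o(1/l)$, so a union bound delivers $\Pr[\mcA' = s] \geq \delta/4$, the constants absorbing a sign-alignment loss and a Markov-style loss from the averaging over $s$. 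Since the same $q$ oracle samples per round are re-transformed across all $2l$ pairs $(i,g)$, the total query complexity is $q' = q N = O(q\,\delta^{-2}\log l)$, while the $2lN$ simulations of $\mcA$ cost $t' = O(t\, l\, \delta^{-2}\log l)$, matching the claim.

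The main obstacle is the sign-ambiguity of $\mcA$'s advantage: the lemma only bounds the absolute value $|\Pr_{\mathcal{Q}_{s,\theta}}[\mcA=1] - \Pr_{\mathcal{U}_{l+1}}[\mcA=1]|$ and this quantity is itself averaged over $s \lar \bool^l$, so on a typical $s$ the signed gap $\delta_s$ could even flip direction. The fix is the pre-estimation of $\Pr_{\mathcal{U}_{l+1}}[\mcA=1]$ described above, together with a Markov argument extracting a $\delta/2$-fraction of ``good'' secrets on which $|\delta_s| \geq \delta/2$; once these are identified, the remaining accounting is mechanical Chernoff and union-bound bookkeeping.
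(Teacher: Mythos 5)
The paper itself gives no proof of this lemma; it is imported verbatim from Katz--Shin \cite{KS06}, so there is nothing internal to compare against. Your blind reconstruction is, in substance, exactly the Katz--Shin argument (itself adapted from Regev's search-to-decision reduction): the per-index gadget $T_{i,g}$ is the standard re-randomization that keeps the distribution at $\mathcal{Q}_{s,\theta}$ when $g=s_i$ and collapses it to $\mathcal{U}_{l+1}$ otherwise, and your computation of $b' = \langle s,a'\rangle \oplus (s_i\oplus g)(a_i\oplus r)\oplus e$ verifying both cases is correct. You also correctly identified and handled the one genuinely delicate point, namely that the LPNDP advantage is an absolute value of a quantity averaged over $s$, so one must first extract (by reverse Markov) a $\delta/2$-fraction of secrets with signed gap at least $\delta/2$ and then use a sign-agnostic decision rule (distance from an independently pre-estimated $\Pr_{\mathcal{U}_{l+1}}[\mcA=1]$); this is precisely where the final $\delta/4$ comes from. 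The accounting $q'=O(q\delta^{-2}\log l)$ (reusing each round's batch across all $2l$ tests) and $t'=O(tl\delta^{-2}\log l)$ matches the statement. I see no gap; the only parts left ``mechanical'' (Hoeffding constants, the union bound over $i$, and the independence across rounds of the reused-within-round batches) are indeed routine.
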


The reader should be aware that in the current state of the art, the average-case hardness of these two assumptions, as well as all other assumptions used in public-key cryptography, cannot be reduced to the worst-case hardness of a NP-hard problem\footnote{Quite remarkably, some lattice problems enjoy average-case to worst-case reductions, but these are not for problems known to be NP-hard.} (and even if that was the case, we do not even know if $\mathcal{P} \neq \mathcal{NP}$). The confidence on the hardness of solving all these problems on average-case (that is what cryptography really needs) comes from the lack of efficient solutions despite the efforts of the scientific community over the years. But more studies are, of course, necessary in order to better assess the difficulties of such problems. We should highlight that when compared to cryptosystems based on number-theoretical assumptions such as the hardness of factoring or of computing the discrete-log, the cryptosystems based on coding and lattice assumptions have the advantage that no efficient quantum algorithm breaking the assumptions is known. One should also be careful when implementing the McEliece cryptosystem as to avoid side-channel attacks \cite{ST2008}.

\begin{figure}
\begin{tabular} {c|c|c|c}
(m,t)&plaintext size&ciphertext size&security (key)\\
\hline
(10,50)&524&1024&491\\
(11,32)&1696&2048&344\\
(12,40)&3616&4096&471
\end{tabular}
\caption{A table of McEliece key parameters and security estimates taken from \cite{Sen10}.}
\end{figure}


\subsection{Signature Schemes}

Now we define signature schemes ($\mathsf{SS}$) and the security notion called one-time strong unforgeability.

\begin{definition}(Signature Scheme). A signature scheme is a triplet of algorithms $\mathsf{(Gen}$, $\mathsf{Sign}$, $\mathsf{Ver)}$ such that:

\begin{itemize}
\item $\mathsf{Gen}$ is a probabilistic polynomial-time key generation algorithm which takes as input a security parameter $1^n$ and outputs a verification key $\mathsf{vk}$ and a signing key $\mathsf{dsk}$. The verification key specifies the message space $\mathcal{M}$ and the signature space $\mathcal{S}$.
\item $\mathsf{Sign}$ is a (possibly) probabilistic polynomial-time signing algorithm which receives as input a signing key $\mathsf{dsk}$ and a message $\mathsf{m} \in \mathcal{M}$, and outputs a signature $\mathsf{\sigma} \in \mathcal{S}$.
\item $\mathsf{Ver}$ is a deterministic polynomial-time verification algorithm which takes as input a verification key $\mathsf{vk}$, a message $\mathsf{m} \in \mathcal{M}$ and a signature $\mathsf{\sigma} \in \mathcal{S}$, and outputs a bit indicating whether $\mathsf{\sigma}$ is a valid signature for $\mathsf{m}$ or not (i.e., the algorithm outputs $1$ if it is a valid signature and outputs $0$ otherwise).
\item (Completeness) For any pair of signing and verification keys generated by $\mathsf{Gen}$ and any message $\mathsf{m} \in \mathcal{M}$ it holds that $\mathsf{Ver}(\mathsf{vk}, \mathsf{m}, \mathsf{Sign}(\mathsf{dsk},\mathsf{m}))=1$ with overwhelming probability over the randomness used by $\mathsf{Gen}$ and $\mathsf{Sign}$.
\end{itemize}
\end{definition}

\begin{definition} (One-Time Strong Unforgeability). To a two-stage adversary $\mathcal{A} = (\mathcal{A}_1, \mathcal{A}_2)$ against $\mathsf{SS}$ we associate the following experiment.

\smallskip
\framebox{
\begin{minipage}[t]{3.0in}
$\mathsf{Exp}^{otsu}_{\mathsf{SS},\mathcal{A}}(n)$:

$(\mathsf{vk},\mathsf{dsk}) \leftarrow \mathsf{Gen}(1^n)$

$(\mathsf{m}, state) \leftarrow \mathcal{A}_1(\mathsf{vk})$

$\mathsf{\sigma} \leftarrow \mathsf{Sign}(\mathsf{dsk},\mathsf{m})$

$(\mathsf{m}^*,\mathsf{\sigma}^*)  \leftarrow \mathcal{A}_2(\mathsf{m}, \mathsf{\sigma}, state)$

If $\mathsf{Ver}(\mathsf{vk}, \mathsf{m}^*, \mathsf{\sigma}^*)=1$ and $(\mathsf{m}^*, \mathsf{\sigma}^*) \neq (\mathsf{m}, \mathsf{\sigma})$ return $1$, else return $0$
\end{minipage}}\\

We say that a signature scheme $\mathsf{SS}$ is one-time strongly unforgeable  if for all probabilist polynomial-time (PPT) adversaries $\mathcal{A} = (\mathcal{A}_1, \mathcal{A}_2)$ the probability that $\mathsf{Exp}^{otsu}_{\mathsf{SS},\mathcal{A}}(n)$ outputs $1$ is a negligible function of $n$. One-way functions are sufficient to construct existentially unforgeable one-time signature schemes \cite{La79,NY89}.

\end{definition}

\section{$k$-repetition PKE}

\subsection{Definitions}

We now define a $k$-repetition Public-Key Encryption.

\begin{definition}($k$-repetition Public-Key Encryption). For a $\mathsf{PKE}$ $(\mathsf{Gen}$, $\mathsf{Enc}$, $\mathsf{Dec})$ and a randomized encoding-function $\ms{E}$ with a decoding-function $\ms{D}$, we define the $k$-repetition public-key encryption scheme $(\mathsf{PKE}_k)$ as the triplet of algorithms $(\mathsf{Gen}_k$, $\mathsf{Enc}_k$, $\mathsf{Dec}_k)$ such that:

\begin{itemize}
\item $\mathsf{Gen}_k$ is a probabilistic polynomial-time key generation algorithm which takes as input a security parameter $1^n$ and calls $\mathsf{PKE}$'s key generation algorithm $k$ times obtaining the public-keys $(\mathsf{pk}_1,\ldots, \mathsf{pk}_k)$ and the secret-keys $(\mathsf{sk}_1,\ldots, \mathsf{sk}_k)$. $\mathsf{Gen}_k$ sets the public-key as $\mathsf{pk}=(\mathsf{pk}_1,\ldots, \mathsf{pk}_k)$ and the secret-key as $\mathsf{sk} =(\mathsf{sk}_1,\ldots, \mathsf{sk}_k)$.

\item $\mathsf{Enc}_k$ is a probabilistic polynomial-time encryption algorithm which receives as input a public-key $\mathsf{pk}=(\mathsf{pk}_1,\ldots, \mathsf{pk}_k)$, a message $\mathsf{m} \in \mathcal{M}$ and coins $s$ and $r_1,\ldots,r_k$, and outputs a ciphertext $\mathsf{c}= (\mathsf{c}_1,\ldots, \mathsf{c}_k)= (\mathsf{Enc}(\mathsf{pk}_1,\ms{E}(\mathsf{m};s);r_1),\ldots, \mathsf{Enc}(\mathsf{pk}_k,\ms{E}(\mathsf{m};s);r_k))$.

\item $\mathsf{Dec}_k$ is a deterministic polynomial-time decryption algorithm which takes as input a secret-key $\mathsf{sk} =(\mathsf{sk}_1,\ldots, \mathsf{sk}_k)$ and a ciphertext $\mathsf{c} = (\mathsf{c}_1,\ldots, \mathsf{c}_k)$. It outputs a message $\mathsf{m}$ if $\ms{D}(\mathsf{Dec}(\mathsf{sk}_1,\mathsf{c}_1)), \ldots, \ms{D}(\mathsf{Dec}(\mathsf{sk}_k,\mathsf{c}_k))$ are all equal to some $\mathsf{m} \in \mathcal{M}$. Otherwise, it outputs an error symbol $\perp$.

\item (Completeness) For any $k$ pairs of public and secret-keys generated by $\mathsf{Gen}_k$ and any message $\mathsf{m} \in \mathcal{M}$ it holds that $\mathsf{Dec}_k(\mathsf{sk},\mathsf{Enc}_k(\mathsf{pk},\mathsf{m}))=\mathsf{m}$ with overwhelming probability over the random coins used by $\mathsf{Gen}_k$ and $\mathsf{Enc}_k$.
\end{itemize}
\end{definition}

We also define security properties that the $k$-repetition Public-Key Encryption scheme used in the next sections should meet.

\begin{definition} (Security under uniform $k$-repetition of encryption schemes). 
We say that $\mathsf{PKE}_k$ (built from an encryption scheme $\mathsf{PKE}$) is secure under uniform $k$-repetition if $\mathsf{PKE}_k$ is IND-CPA secure.
\end{definition}

\begin{definition} (Verification under uniform $k$-repetition of encryption schemes). 
We say that $\ms{PKE}_k$ is verifiable under uniform $k$-repetition if there exists an efficient deterministic algorithm $\ms{Verify}$ such that given a ciphertext $\mathsf{c} \in \mathcal{C}$, the public-key $\mathsf{pk}=(\mathsf{pk}_1,\ldots, \mathsf{pk}_k)$ and any $\mathsf{sk}_i$ for $i \in \{1, \ldots, k\}$, it holds that if $\ms{Verify}(\ms{c},\ms{pk},\ms{sk}_i) = 1$ then $\ms{Dec}_k(\ms{sk},\ms{c}) = \ms{m}$ for some $\ms{m} \neq \perp$ (i.e. $\ms{c}$ decrypts to a valid plaintext).
\end{definition}

Notice that for the scheme $\ms{PKE}_k$ to be verifiable, the underlying scheme $\ms{PKE}$ cannot be IND-CPA secure, as the verification algorithm of $\ms{PKE}_k$ implies an efficient IND-CPA adversary against $\ms{PKE}$. Thus, we may only require that $\ms{PKE}$ is OW-CPA secure.

\subsection{IND-CCA2 Security from verifiable IND-CPA Secure $k$-repetition PKE}\label{sec_cca}

In this subsection we construct the IND-CCA2 secure public-key encryption scheme $(\mathsf{PKE}_{cca2})$ and prove its security. We assume the existence of an one-time strongly unforgeable signature scheme $\ms{SS} = (\ms{Gen},\ms{Sign},\ms{Ver})$ and of a $\mathsf{PKE}_k$ that is secure and verifiable under uniform $k$-repetition. 

We use the following notation for derived keys: For a public-key $\ms{pk} = (\ms{pk}_1^0,\ms{pk}_1^1,\ldots,\ms{pk}_k^0,\ms{pk}_k^1)$ and a $k$-bit string $\ms{vk}$ we write $\ms{pk}^{\ms{vk}} = (\ms{pk}_1^{\ms{vk}_1},\ldots,\ms{pk}_k^{\ms{vk}_k})$. We will use the same notation for secret-keys $\ms{sk}$.

\begin{itemize}
\item Key Generation: $\ms{Gen}_{cca2}$ is a probabilistic polynomial-time key generation algorithm which takes as input a security parameter $1^n$. $\ms{Gen}_{cca2}$ calls $\mathsf{PKE}$'s key generation algorithm $2k$ times to obtain public-keys $\mathsf{pk}_1^0,\mathsf{pk}_1^1,\ldots, \mathsf{pk}_k^0,\mathsf{pk}_k^1$ and secret-keys $\mathsf{sk}_1^0, \mathsf{sk}_1^1, \ldots, \mathsf{sk}_k^0,\mathsf{sk}_k^1$. It sets $\ms{pk} = (\ms{pk}_1^0,\ms{pk}_1^1,\dots,\ms{pk}_k^0,\ms{pk}_k^1)$, $\ms{sk} = (\ms{sk}_1^0,\ms{sk}_1^1,\dots,\ms{sk}_k^0,\ms{sk}_k^1)$ and outputs $\ms{pk},\ms{sk})$		

\item Encryption: $\mathsf{Enc}_{cca2}$ is a probabilistic polynomial-time encryption algorithm which receives as input the public-key $\ms{pk}=(\mathsf{pk}_1^0, \mathsf{pk}_1^1 ,\ldots, \mathsf{pk}_k^0, \mathsf{pk}_k^1)$ and a message $\mathsf{m} \in \mathcal{M}$ and proceeds as follows:

\begin{enumerate}
\item Executes the key generation algorithm of the signature scheme obtaining a signing key $\mathsf{dsk}$ and a verification key $\mathsf{vk}$.
\item Compute $\mathsf{c}^\prime=\mathsf{Enc}_k(\mathsf{pk}^{\ms{vk}},\ms{m};r)$ where $r$ are random coins.
\item Computes the signature $\mathsf{\sigma}= \mathsf{Sign}(\mathsf{dsk},\ms{c}^\prime)$.
\item Outputs the ciphertext $\mathsf{c}=(\mathsf{c}^\prime, \mathsf{vk}, \mathsf{\sigma})$.
\end{enumerate}

\item Decryption:  $\mathsf{Dec}_{cca2}$ is a deterministic polynomial-time decryption algorithm which takes as input a secret-key  $\mathsf{sk} =(\mathsf{sk}_1^0,\mathsf{sk}_1^1,\ldots, \mathsf{sk}_k^0,\mathsf{sk}_k^1)$ and a ciphertext $\mathsf{c}=(\ms{c}^\prime, \mathsf{vk}, \mathsf{\sigma})$ and proceeds as follows:

\begin{enumerate}
\item If $\mathsf{Ver}(\mathsf{vk},\ms{c}^\prime, \mathsf{\sigma})=0$, it outputs $\perp$ and halts.
\item It computes and outputs $\mathsf{m}=\mathsf{Dec}_k(\ms{sk}^{\ms{vk}}, \mathsf{c}^\prime)$.
\end{enumerate}
\end{itemize}

Note that if $\mathsf{c}^\prime$ is an invalid ciphertext (i.e. not all $\ms{c}_i^\prime$ decrypt to the same plaintext), then $\mathsf{Dec}_{cca2}$ outputs $\bot$ as $\mathsf{Dec}_k$ outputs $\bot$.

As in~\cite{RS09}, we can apply a universal one-way hash function to the verification keys (as in~\cite{DDN00}) and use $k=n^{\epsilon}$ for a constant $0 < \epsilon < 1$. Note that the hash function in question need not be modeled as a random oracle. For ease of presentation, we do not apply this method in our scheme description.

\begin{theorem}\label{the_cca}
Given that $\mathsf{SS}$ is an one-time strongly unforgeable signature scheme and that $\mathsf{PKE}_k$ is IND-CPA secure and verifiable under uniform $k$-repetition, the public-key encryption scheme $\mathsf{PKE}_{cca2}$ is IND-CCA2 secure.
\end{theorem}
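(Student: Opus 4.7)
\medskip
\noindent\textbf{Proof plan.} The approach is a standard CHK-style argument adapted to the verifiable $k$-repetition setting. Given a PPT IND-CCA2 adversary $\mcA=(\mcA_1,\mcA_2)$ against $\ms{PKE}_{cca2}$, I will describe a simulator that generates the signature key pair $(\ms{vk}^\ast,\ms{dsk}^\ast)$ at the very beginning of the experiment, so that the bit-string $\ms{vk}^\ast$ is fixed before the public key $\ms{pk}$ is even shown to $\mcA$. Let $\ms{Forge}$ denote the event that $\mcA$ ever submits a decryption query of the form $(\ms{c}',\ms{vk}^\ast,\sigma)$ with $(\ms{c}',\sigma)\neq(\ms{c}'^{\,\ast},\sigma^\ast)$ that verifies under $\ms{vk}^\ast$. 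The strategy is to bound $\Pr[\ms{Forge}]$ by the one-time strong unforgeability of $\ms{SS}$, and then to bound the advantage conditioned on $\neg\ms{Forge}$ by the IND-CPA security of $\ms{PKE}_k$.

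\medskip
\noindent First I would reduce $\ms{Forge}$ to the one-time strong unforgeability of $\ms{SS}$. The reduction receives $\ms{vk}^\ast$ from its signing oracle, generates all $2k$ PKE key pairs itself, and simulates $\mcA$'s view perfectly; the challenge ciphertext uses the single available signing query to produce $\sigma^\ast$ on $\ms{c}'^{\,\ast}$. Any decryption query $(\ms{c}',\ms{vk}^\ast,\sigma)\neq(\ms{c}'^{\,\ast},\sigma^\ast)$ that passes $\ms{Ver}$ is a valid forgery, so $\Pr[\ms{Forge}]\le\ms{Adv}^{otsu}_{\ms{SS},\mcB}(n)$ is negligible.

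\medskip
\noindent The main step is to reduce the IND-CCA2 advantage of $\mcA$ in the event $\neg\ms{Forge}$ to the IND-CPA security of $\ms{PKE}_k$. The reduction $\mcB'$ plays the IND-CPA game against $\ms{PKE}_k$: it receives a $k$-tuple public key $\widetilde{\ms{pk}}=(\widetilde{\ms{pk}}_1,\dots,\widetilde{\ms{pk}}_k)$ from its challenger. It samples $(\ms{vk}^\ast,\ms{dsk}^\ast)\la\ms{Gen}(1^n)$, plants $\widetilde{\ms{pk}}_i$ as $\ms{pk}_i^{\ms{vk}^\ast_i}$ for each $i$, and generates the $k$ complementary key pairs $(\ms{pk}_i^{1-\ms{vk}^\ast_i},\ms{sk}_i^{1-\ms{vk}^\ast_i})$ itself; the assembled $\ms{pk}$ is then given to $\mcA_1$. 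On receiving the challenge plaintexts $(\ms{m}^0,\ms{m}^1)$ from $\mcA_1$, $\mcB'$ forwards them to its IND-CPA challenger, obtains $\ms{c}'^{\,\ast}=\ms{Enc}_k(\widetilde{\ms{pk}},\ms{m}^b)$, signs it as $\sigma^\ast\la\ms{Sign}(\ms{dsk}^\ast,\ms{c}'^{\,\ast})$, and returns $(\ms{c}'^{\,\ast},\ms{vk}^\ast,\sigma^\ast)$ to $\mcA_2$. It outputs whatever bit $\mcA$ outputs. Distributions are identical to the real CCA2 experiment.

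\medskip
\noindent The hard part is simulating the decryption oracle without ever knowing the secret keys $\ms{sk}_i^{\ms{vk}^\ast_i}$, and this is exactly where verifiability under uniform $k$-repetition is indispensable. On a query $(\ms{c}',\ms{vk},\sigma)$, $\mcB'$ first runs $\ms{Ver}(\ms{vk},\ms{c}',\sigma)$ and outputs $\perp$ if it fails. If $\ms{vk}=\ms{vk}^\ast$ then either $(\ms{c}',\sigma)=(\ms{c}'^{\,\ast},\sigma^\ast)$, which is the forbidden repeat query, or $\ms{Forge}$ has occurred and we abort. Otherwise there exists an index $i^\ast$ with $\ms{vk}_{i^\ast}\neq\ms{vk}^\ast_{i^\ast}$, so $\mcB'$ holds $\ms{sk}_{i^\ast}^{\ms{vk}_{i^\ast}}$. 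It invokes $\ms{Verify}(\ms{c}',\ms{pk}^{\ms{vk}},\ms{sk}_{i^\ast}^{\ms{vk}_{i^\ast}})$; if this fails, it returns $\perp$, and by the verifiability guarantee this is consistent with what $\ms{Dec}_k$ would output up to negligible completeness error; if it succeeds, then $\ms{Dec}_k(\ms{sk}^{\ms{vk}},\ms{c}')$ equals a unique $\ms{m}\in\mcM$, and $\mcB'$ recovers it by decoding $\ms{D}(\ms{Dec}(\ms{sk}_{i^\ast}^{\ms{vk}_{i^\ast}},\ms{c}'_{i^\ast}))$. Thus every decryption query is answered faithfully under $\neg\ms{Forge}$, and $\mcB'$'s IND-CPA advantage matches $\mcA$'s conditional CCA2 advantage. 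Combining this with the forgery bound yields the theorem.
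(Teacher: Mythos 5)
Your proposal is correct and follows essentially the same route as the paper's proof: the paper phrases the first step as a game hop (Game~1 to Game~2, where queries with $\ms{vk}=\ms{vk}^\ast$ are rejected) rather than conditioning on a forgery event, but the substance is identical --- one-time strong unforgeability rules out valid queries reusing $\ms{vk}^\ast$, and then the IND-CPA reduction plants the challenge public key at the positions indexed by $\ms{vk}^\ast$ and uses the $\ms{Verify}$ algorithm with a secret key at some index where $\ms{vk}_i\neq\ms{vk}^\ast_i$ to simulate decryption. No gaps beyond those already present in the paper's own argument.
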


\begin{proof}
In this proof, we closely follow~\cite{RS09}. Denote by $\mathcal{A}$ the IND-CCA2 adversary. Consider the following sequence of games.
\begin{itemize}
	\item \bf{Game 1} This is the IND-CCA2 game.
	\item \bf{Game 2} Same as game 1, except that the signature-keys $(\ms{vk}^\ast,\ms{dsk}^\ast)$ that are used for the challenge-ciphertext $\ms{c}^\ast$ are generated before the interaction with $\mc{A}$ starts. Further, game 2 always outputs $\perp$ if $\mc{A}$ sends a decryption query $\ms{c} = (\ms{c}^\prime,\ms{vk},\sigma)$ with $\ms{vk} = \ms{vk}^\ast$.
\end{itemize}
We will now establish the remaining steps in two lemmata.

\begin{lemma}\label{lem_g1g2}
It holds that $\ms{view}_{\ms{Game 1}}(\mc{A}) \approx_c \ms{view}_{\ms{Game 2}}(\mc{A})$, given that $(\ms{Gen},\ms{Sign},\ms{Ver})$ is an one-time strongly unforgeable signature scheme.
\end{lemma}

\begin{proof}
Given that $\mc{A}$ does not send a valid decryption query $\ms{c} = (\ms{c}^\prime,\ms{vk},\sigma)$ with $\ms{vk} = \ms{vk}^\ast$ and $\ms{c} \neq \ms{c}^\ast$, $\mc{A}$'s views in game 1 and game 2 are identical. Thus, in order to distinguish game 1 and game 2 $\mc{A}$ must send a valid decryption query $\ms{c} = (\ms{c}^\prime,\ms{vk},\sigma)$ with $\ms{vk} = \ms{vk}^\ast$ and $\ms{c} \neq \ms{c}^\ast$. We will use $\mc{A}$ to construct an adversary $\mc{B}$ against the one-time strong unforgeability of the signature scheme $(\ms{Gen},\ms{Sign},\ms{Ver})$. $\mc{B}$ basically simulates the interaction of game 2 with $\mc{A}$, however, instead of generating $\ms{vk}^\ast$ itself, it uses the $\ms{vk}^\ast$ obtained from the one-time strong unforgeability experiment. Furthermore, $\ms{B}$ generates the signature $\sigma$ for the challenge-ciphertext $\ms{c}^\ast$ by using its signing oracle provided by the one-time strong unforgeability game. Whenever $\mc{A}$ sends a valid decryption query $\ms{c} = (\ms{c}^\prime,\ms{vk},\sigma)$ with $\ms{vk} = \ms{vk}^\ast$ and $\ms{c} \neq \ms{c}^\ast$, $\mc{B}$ terminates and outputs $(\ms{c}^\prime,\sigma)$. Obviously, $\mc{A}$'s output is identically distributed in Game 2 and $\mc{B}$'s simulation. Therefore, if $\mc{A}$ distinguishes between game 1 and game 2 with non-negligible advantage $\epsilon$, then $\mc{B}$'s probability of forging a signature is also $\epsilon$, thus breaking the one-time strong unforgeability of $(\ms{Gen},\ms{Sign},\ms{Ver})$.
\end{proof}

\begin{lemma}\label{lem_g2}
It holds that $\ms{Adv}_{\ms{Game 2}}(\mc{A})$ is negligible in the security parameter, given that $\ms{PKE}_k$ is verifiable and IND-CPA secure under uniform k-repetition.
\end{lemma}

\begin{proof}
Assume that $\ms{Adv}_{\ms{Game 2}}(\mc{A}) \geq \epsilon$ for some non-negligible $\epsilon$. We will now construct an IND-CPA adversary $\mc{B}$ against $\ms{PKE}_k$ that breaks the IND-CPA security of $\ms{PKE}_k$ with advantage $\epsilon$. Instead of generating $\ms{pk}$ like game 2, $\mc{B}$ proceeds as follows. Let $\ms{pk}^\ast = (\ms{pk}_1^\ast,\dots,\ms{pk}_k^\ast)$ be the public-key provided by the IND-CPA experiment to $\mc{B}$. $\mc{B}$ first generates a pair of keys for the signature scheme $(\ms{vk}^\ast,\ms{dsk}^\ast) \lar \ms{Gen}(1^n)$. Then, the public-key $\ms{pk}$ is formed by setting $\ms{pk}^{\ms{vk}^\ast} = \ms{pk}^\ast$. All remaining components $\ms{pk}_i^j$ of $\ms{pk}$ are generated by $(\ms{pk}_i^j,\ms{sk}_i^j) \lar \ms{Gen}(1^n)$, for which $\mc{B}$ stores the corresponding $\ms{sk}_i^j$. Clearly, the $\ms{pk}$ generated by $\mc{B}$ is identically distributed to the $\ms{pk}$ generated by game 2, as the $\ms{Gen}$-algorithm of $\ms{PKE}_k$ generates the components of $\ms{pk}$ independently. Now, whenever $\mc{A}$ sends a decryption query $\ms{c} = (\ms{c}^\prime,\ms{vk},\sigma)$, where $\ms{vk} \neq \ms{vk}^\ast$ (decryption queries with $\ms{vk} = \ms{vk}^\ast$ are not answered by game 2), $\mc{B}$ picks an index $i$ with $\ms{vk}_i \neq \ms{vk}^\ast_i$ and checks if $\ms{Verify}(\ms{c}^\prime,\ms{pk},\ms{sk}_i^{\ms{vk}_i}) = 1$, if not it outputs $\perp$. Otherwise it computes $\ms{m} = \ms{D}(\ms{Dec}(\ms{sk}_i,\ms{c}^\prime_i))$. Verifiability guarantees that it holds that $\ms{Dec}_k(\ms{sk}^{\ms{vk}},\ms{c}^\prime) = \ms{m}$, i.e. the output $\ms{m}$ is identically distributed as in game 2. When $\mc{A}$ sends the challenge-messages $\ms{m}_0,\ms{m}_1$, $\mc{B}$ forwards $\ms{m}_0,\ms{m}_1$ to the IND-CPA experiments and receives a challenge-ciphertext $\ms{c}^{\ast \prime}$. $\mc{B}$ then computes $\sigma = \ms{Sign}(\ms{dsk}^\ast,\ms{c}^{\ast \prime})$ and sends $\ms{c}^\ast = (\ms{c}^{\ast \prime},\ms{vk}^\ast,\sigma)$ to $\mc{A}$. This $\ms{c}^\ast$ is identically distributed as in game 2. Once $\mc{A}$ produces output, $\mc{B}$ outputs whatever $\mc{A}$ outputs. Putting it all together, $\mc{A}$'s views are identically distributed in game 2 and in the simulation of $\mc{B}$. Therefore it holds that $\ms{Adv}_{\ms{IND-CPA}}(\mc{B}) = \ms{Adv}_{\ms{Game 2}}(\mc{A}) \geq \epsilon$. Thus $\mc{B}$ breaks the IND-CPA security of $\ms{PKE}_k$ with non-negligible advantage $\epsilon$, contradicting the assumption.
\end{proof}

Plugging Lemma \ref{lem_g1g2} and Lemma \ref{lem_g2} together immediately establishes that any PPT IND-CCA2 adversary $\mc{A}$ has at most negligible advantage in winning the IND-CCA2 experiment for the scheme $\ms{PKE_{cca2}}$.
\end{proof}

\section{A Verifiable $k$-repetition McEliece Scheme}\label{sec:rme}

In this section, we will instantiate a verifiable $k$-repetition encryption scheme $\ms{PKE}_{\ms{McE,k}} = (\ms{Gen}_{\ms{McE,k}},\ms{Enc}_{\ms{McE,k}},\ms{Dec}_{\ms{McE,k}})$ based on the McEliece cryptosystem. 

In~\cite{NIKM07} it was proved that the cryptosystem obtained by changing the encryption algorithm of the McEliece cryptosystem to encrypt $\mathsf{s|m}$ (where $\mathsf{s}$ is random padding) instead of just encrypting the message $\ms{m}$ (the so called Randomized McEliece cryptosystem) is IND-CPA secure, if $|s|$ is chosen sufficiently large for the LPNDP to be hard (e.g. linear in the security-parameter $n$). We will therefore use the randomized encoding-function $\ms{E}(\ms{m};\ms{s}) = \ms{s} | \ms{m}$ (with $|s| \in \Omega(n)$) in our verifiable $k$-repetition McEliece scheme. As basis scheme $\ms{PKE}$ for our verifiable $k$-repetition McEliece scheme we use the OW-CPA secure textbook McEliece with a Bernoulli error-distribution.

The verification algorithm $\ms{Verifiy}_{\ms{McE}}(\ms{c},\ms{pk},\ms{sk}_i)$ works as follows. Given a secret-key $\ms{sk}_i$ from the secret-key vector $\ms{sk}$, it first decrypts the $i$-th component of $\ms{c}$ by $\ms{x} = \ms{Dec}_{\ms{McE}}(\ms{sk}_i,\ms{c}_i)$. Then, for all $j = 1,\dots,k$, it checks whether the vectors $\ms{c}_j \oplus \ms{x} \mb{G}_j$ have a Hamming-weight smaller than $t$, where $\mb{G}_j$ is the generator-matrix given in $\ms{pk}_j$. If so, $\ms{Verify}_{\ms{McE}}$ outputs $1$, otherwise $0$. Clearly, if $\ms{Verify}_{\ms{McE}}$ accepts, then all ciphertexts $\ms{c}_j$ are close enough to the respective codewords $\ms{x} \ms{G}_j$, i.e. invoking $\ms{Dec}_{\ms{McE}}(\ms{sk}_j,\ms{c}_j)$ would also output $\ms{x}$. Therefore, we have that $\ms{Verifiy}_{\ms{McE}}(\ms{c},\ms{pk},\ms{sk}_i) = 1$, if and only if $\ms{Dec}_{\ms{McE,k}}(\ms{sk},\ms{c}) = \ms{m}$ for some $\ms{m} \in \mc{M}$.

\subsection{Security of the k-repetition Randomized McEliece}

We now prove that the modified Randomized McEliece is IND-CPA secure under $k$-repetition.

By the completeness of each instance, the probability that in one instance $i \in \{1,\ldots,k\}$ a correctly generated ciphertext is incorrectly decoded is negligible. Since $k$ is polynomial, it follows by the union bound that the probability that a correctly generated ciphertext of $\mathsf{PKE}_{k,McE}$ is incorrectly decoded is also negligible. So $\mathsf{PKE}_{k,McE}$ meets the completeness requirement.

Denote by $\mathbf{R}_1,\ldots, \mathbf{R}_k$ random matrices of size $l \times n$, by $\mathbf{G}_1,\ldots, \mathbf{G}_k$ the public-key matrices of the McEliece cryptosystem and by $\mathsf{e}_1,\ldots, \mathsf{e}_k$ the error vectors. Define $l_1=|\mathsf{s}|$ and $l_2=|\mathsf{m}|$. Let $\mathbf{R}_{i,1}$ and $\mathbf{R}_{i,2}$ be the $l_1 \times n$ and $l_2 \times n$ sub-matrices of $\mathbf{R}_i$ such that $\mathbf{R}_i^T=\mathbf{R}_{i,1}^T|\mathbf{R}_{i,2}^T$. Define $\mathbf{G}_{i,1}$ and $\mathbf{G}_{i,2}$ similarly.

\begin{lemma}\label{lem_ind}
The scheme $\ms{PKE}_{\ms{McE},k}$ is IND-CPA secure, given that both the McEliece assumption and the LPNDP assumption hold.
\end{lemma}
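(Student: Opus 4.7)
The plan is a standard hybrid argument that first derandomizes the public keys by Assumption~\ref{as:ind} and then exploits the shared-randomness structure of the encoding $\ms{E}(\ms{m};\ms{s}) = \ms{s}|\ms{m}$ through a single LPNDP instance with many samples.

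\textbf{Step 1 (Replace generator matrices by uniform matrices).} I would first define a sequence of hybrids $H_0, H_1, \dots, H_k$, where $H_0$ is the real IND-CPA game for $\ms{PKE}_{\ms{McE},k}$ and where $H_i$ is obtained from $H_{i-1}$ by replacing the McEliece public-key matrix $\mb{G}_i$ by a truly uniform matrix $\mb{R}_i \lar \mc{U}(\F_2^{l \times n})$. A distinguisher between $H_{i-1}$ and $H_i$ immediately yields a distinguisher for Assumption~\ref{as:ind} (the other $k-1$ keys and the challenge ciphertext can be simulated internally from the challenge input). Hence $H_0 \approx_c H_k$.

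\textbf{Step 2 (Reduce to LPNDP with $nk$ samples).} In $H_k$ the challenge ciphertext has the form
\begin{equation*}
\ms{c}_i \;=\; \ms{s}\,\mb{R}_{i,1} \;\oplus\; \ms{m}^b\,\mb{R}_{i,2} \;\oplus\; \ms{e}_i, \qquad i=1,\dots,k,
\end{equation*}
where the random pad $\ms{s}\in\F_2^{l_1}$ is shared across all $k$ ciphertexts. I would then form the horizontal concatenation $\mb{R}^\ast = [\mb{R}_{1,1} \,|\, \cdots \,|\, \mb{R}_{k,1}] \in \F_2^{l_1 \times nk}$ and the concatenated noise $\ms{e}^\ast = \ms{e}_1|\cdots|\ms{e}_k \in \F_2^{nk}$. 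Because each bit of each $\ms{e}_i$ is an independent $\mc{B}_\theta$ sample, $\ms{e}^\ast$ is itself Bernoulli-distributed with parameter $\theta$ -- this is precisely the reason the paper chose the Bernoulli error distribution rather than a fixed-weight one. Thus $(\mb{R}^\ast,\, \ms{s}\,\mb{R}^\ast \oplus \ms{e}^\ast)$ is exactly $nk$ independent samples from the distribution $\mc{Q}_{\ms{s},\theta}$ of Definition~\ref{def:slpn}. An LPNDP distinguisher collecting these $nk$ samples (and using the independently chosen $\mb{R}_{i,2}$, which it can sample on its own) can perfectly simulate the game $H_k$ versus the game $H_{k+1}$ defined below. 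Hence, by LPNDP, I can replace $\ms{s}\mb{R}^\ast \oplus \ms{e}^\ast$ by a uniform string $\ms{u}^\ast = \ms{u}_1|\cdots|\ms{u}_k \in \F_2^{nk}$, obtaining a hybrid $H_{k+1}$ with
\begin{equation*}
\ms{c}_i \;=\; \ms{u}_i \;\oplus\; \ms{m}^b\,\mb{R}_{i,2}.
\end{equation*}

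\textbf{Step 3 (Information-theoretic conclusion).} In $H_{k+1}$ each $\ms{u}_i$ is uniform and independent of $\mb{R}_{i,2}$, $\ms{m}^b$, and the other ciphertext components, so each $\ms{c}_i$ is uniformly distributed on $\F_2^n$ regardless of the challenge bit $b$. Therefore $\mc{A}$'s view in $H_{k+1}$ is statistically independent of $b$, giving advantage exactly $0$. Combining the negligible losses across $H_0 \to H_k$ (by Assumption~\ref{as:ind}) and $H_k \to H_{k+1}$ (by LPNDP, which by Lemma~\ref{lem_LPN} is as hard as $\ms{LPN}_\theta$), any PPT $\mc{A}$ has negligible advantage in $H_0$.

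\textbf{Main obstacle.} The subtle step is Step~2: one must use the \emph{same} secret $\ms{s}$ across all $k$ ciphertext components, which is why the straightforward hybrid across the $k$ copies does not work directly. The correct move is to bundle the $k$ encryptions into one LPNDP instance with $nk$ samples on a single secret. This relies crucially on the composition property of the Bernoulli distribution (so that $\ms{e}_1|\cdots|\ms{e}_k$ really is a single Bernoulli noise vector), which is exactly the property the paper singled out when justifying the switch from fixed-weight to Bernoulli errors; a fixed-weight $t$ noise would not compose like this.
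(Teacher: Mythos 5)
Your proposal is correct and follows essentially the same route as the paper's proof: a hybrid over the $k$ public keys using Assumption~\ref{as:ind}, followed by bundling the $k$ challenge components into a single LPNDP instance on the shared secret $\ms{s}$ via the decomposition $\ms{c}_i = (\ms{s}\,\mb{R}_{i,1}\oplus\ms{e}_i)\oplus\ms{m}^b\,\mb{R}_{i,2}$ and the concatenation property of Bernoulli noise, ending with a statistically uniform challenge ciphertext. The ``main obstacle'' you identify --- that the shared pad forces a single many-sample LPNDP instance rather than a per-component hybrid, which is exactly why the Bernoulli error distribution is needed --- is precisely the point the paper emphasizes.
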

\begin{proof}
Let $\mc{A}$ be an IND-CPA adversary against $\ms{PKE}_{\ms{McE},k}$. Consider the following three games.

\begin{itemize}
	\item \bf{Game 1} This is the IND-CPA game.
	\item \bf{Game 2} Same as game 1, except that the components $\ms{pk}_i$ of the public-key $\ms{pk}$ are computed by $\ms{pk}_i=(\mb{R}_i, t, \mc{M},\mc{C})$ instead of $\ms{pk}_i=(\mb{G}_i, t, \mc{M},\mc{C})$, where $\mb{R}_i$ is a randomly chosen matrix of the same size as $\mb{G}_i$
	\item \bf{Game 3} Same as game 2, except that the components $\ms{c}_i$ of the challenge-ciphertext $\ms{c}^\ast$ are not computed by $\ms{c}_i= (\ms{s} | \ms{m})\mathbf{R}_i\oplus \ms{e}_i$ but rather chosen uniformly at random.
\end{itemize}

Indistinguishability of game 1 and game 2 follows by a simple hybrid-argument using the McEliece assumption, we omit this for the sake of brevity. The indistinguishability of game 2 and game 3 can be established as follows. First observe that it holds that $\ms{c}_i= (\ms{s} | \ms{m})\mathbf{R}_i\oplus \ms{e}_i = (\ms{s} \mb{R}_{i,1} \oplus \ms{e}_i) \oplus \ms{m} \mb{R}_{i,2}$ for $i = 1,\dots,k$. Setting $\mb{R}_1 = \mb{R}_{1,1} | \dots, | \mb{R}_{k,1}$, $\mb{R}_2 = \mb{R}_{1,2} | \dots, | \mb{R}_{k,2}$ and $\ms{e} = \ms{e}_1 | \dots | \ms{e}_k$, we can write $\ms{c}^\ast = (\ms{s} \mb{R}_1 \oplus \ms{e}) \oplus \ms{m} \mb{R}_2$. Now, the LPNDP assumption allows us to substitute $\ms{s} \mb{R}_1 \oplus \ms{e}$ with a uniformly random distributed vector $\ms{u}$, as $\ms{s}$ and $\mb{R}_1$ are uniformly distributed and $\ms{e}$ is Bernoulli distributed. Therefore $\ms{c}^\ast = \ms{u} \oplus \ms{m} \mb{R}_2$ is also uniformly distributed. Thus we have reached game 3. $\mc{A}$'s advantage in game 3 is obviously 0, as the challenge-ciphertext $\ms{c}^\ast$ is statistically independent of the challenge bit $\ms{b}$. This concludes the proof.
\end{proof}

\section{Generalized Scheme}

As in~\cite{RS09}, it is possible to generalize the scheme to encrypt correlated messages instead of encrypting $k$ times the same message $m$. In this Section, we show that a similar approach is possible for our scheme, yielding an IND-CCA2 secure McEliece variant that has asymptotically the same ciphertext expansion as the efficient IND-CPA scheme of \cite{KI01}. We now present a generalized version of our encryption scheme using a correlated plaintext space.

\subsection{Definitions}

\begin{definition}($\tau$-Correlated Messages)
We call a tuple of messages $(\mathsf{m}_1,\ldots, \mathsf{m}_k)$ $\tau$-correlated for some constant $0 < \gamma < 1$ and $\tau=(1-\gamma)k$, if given any $\tau$ messages of tuple it is possible to efficiently recover all the messages. We denote the space of such messages tuples by $\mathcal{M}_{\ms{Cor}}$.
\end{definition}

Basically, $\tau$-correlated messages can be erasure-corrected. Now we define a correlated public-key encryption scheme.

\begin{definition}(Correlated Public-Key Encryption). For a $\ms{PKE}$ $(\ms{Gen}$, $\ms{Enc}$, $\ms{Dec})$ and a randomized encoding-function $\ms{E}$ that maps from the plaintext-space $\mc{M}$ to the correlated plaintext-space $\mc{M}_{\ms{Cor}}$ (with corresponding decoding-function $\ms{D}$), we define the correlated public-key encryption scheme $(\ms{PKE_{Cor}})$ as the triplet of algorithms $(\ms{Gen_{Cor}}$, $\ms{Enc_{Cor}}$, $\ms{Dec_{Cor}})$ such that:

\begin{itemize}
\item $\ms{Gen_{Cor}}$ is a probabilistic polynomial-time key generation algorithm which takes as input a security parameter $1^n$ and calls $\mathsf{PKE}$'s key generation algorithm $k$ times obtaining the public-keys $(\mathsf{pk}_1,\ldots, \mathsf{pk}_k)$ and the secret-keys $(\ms{sk}_1,\ldots, \ms{sk}_k)$. $\ms{Gen_{Cor}}$ sets the public-key as $\ms{pk}=(\ms{pk}_1,\ldots, \ms{pk}_k)$ and the secret-key as $\ms{sk} =(\ms{sk}_1,\ldots, \ms{sk}_k)$.

\item $\ms{Enc_{Cor}}$ is a probabilistic polynomial-time encryption algorithm which receives as input a public-key $\ms{pk}=(\ms{pk}_1,\ldots, \ms{pk}_k)$ and a message $\ms{m} \in \mathcal{M}$. The algorithm computes $\tilde{\ms{m}} = (\tilde{\ms{m}}_1,\dots,\tilde{\ms{m}}_k) = \ms{E}(\ms{m};s)$ (with fresh random coins $s$) and outputs the ciphertext $\ms{c}= (\ms{c}_1,\ldots, \ms{c}_k)= (\ms{Enc}(\ms{pk}_1,\tilde{\ms{m}}_1), \ldots, \ms{Enc}(\ms{pk}_k,\tilde{\ms{m}}_k))$.

\item $\ms{Dec_{Cor}}$ is a deterministic polynomial-time decryption algorithm which takes as input a secret-key $\ms{sk} =(\ms{sk}_1,\ldots, \ms{sk}_k)$ and a ciphertext $\ms{c}= (\ms{c}_1,\ldots, \ms{c}_k)$. It first computes a tuple  $\tilde{\ms{m}}=(\tilde{\ms{m}}_1,\ldots, \tilde{\ms{m}}_k) \in \mc{M}_{\ms{Cor}}$, outputs $\ms{m} = \ms{D}(\tilde{\ms{m}})$ if $\tilde{\ms{m}} \in \mc{M}_{\ms{Cor}}$, if not it outputs an error symbol $\perp$.

\item (Completeness) For any $k$ pairs of public and secret-keys generated by $\ms{Gen_{Cor}}$ and any message $\ms{m}=(\ms{m}_1,\ldots, \ms{m}_k) \in \mc{M}_{\ms{Cor}}$ it holds that $\ms{Dec_{Cor}}(\ms{sk},\ms{Enc_{Cor}}(\ms{pk},\ms{m}))=\ms{m}$ with overwhelming probability over the randomness used by $\ms{Gen_{Cor}}$ and $\ms{Enc_{Cor}}$.
\end{itemize}
\end{definition}

We also define security properties that the Correlated Public-Key Encryption scheme used in the next sections should meet.

\begin{definition} (Security of Correlated Public-Key Encryption). 
We say that $\mathsf{PKE_{Cor}}$ (built from an encryption scheme $\mathsf{PKE}$) is secure if $\mathsf{PKE_{Cor}}$ is IND-CPA secure.
\end{definition}

\begin{definition} ($\tau$-Verification). 
We say that $\ms{PKE_{Cor}}$ is $\tau$-verifiable if the exists a efficient deterministic algorithm $\ms{Verify}$, such that given a ciphertext $\ms{c} \in \mc{C}$, the public-key $\ms{pk}=(\ms{pk}_1,\ldots, \ms{pk}_k)$ and any $\tau$ distinct secret-keys $\ms{sk}_T = (\ms{sk}_{t_1}, \ldots, \ms{sk}_{t_\tau})$ (with $T = \{t_1,\dots,t_\tau \}$), it holds that if $\ms{Verify}(\ms{c},\ms{pk},T,\ms{sk}_T) = 1$ then $\ms{Dec_{Cor}}(\ms{sk},\ms{c}) = \ms{m}$ for some $\ms{m} \neq \perp$ (i.e. $\ms{c}$ decrypts to a valid plaintext).
\end{definition}

\subsection{IND-CCA2 Security from IND-CPA Secure Correlated PKE}\label{sec_ccacor}

We now describe the IND-CCA2 secure public-key encryption scheme $(\mathsf{PKE'}_{cca2})$ built using the correlated PKE and prove its security. We assume the existence of 
a correlated PKE, $\mathsf{PKE_{Cor}}$, that is secure and $\tau$-verifiable. We also use an error correcting code $\mathsf{ECC}: \Sigma^l \rightarrow \Sigma^k$ with minimum distance $\tau$ and polynomial-time encoding. Finally, we assume the existence of an one-time strongly unforgeable signature scheme $\ms{SS} = (\ms{Gen},\ms{Sign},\ms{Ver})$ in which the verification keys are elements of $\Sigma^l$ (we assumed that the verification keys are elements of $\Sigma^l$ only for simplicity, we can use any signature scheme if there is a injective mapping from the set of verification keys to $\Sigma^l$). 

We will use the following notation: For a codeword $\ms{d} = (\ms{d}_1,\dots,\ms{d}_k) \in \ms{ECC}$, set $\ms{pk}^{\ms{d}} = (\ms{pk}_1^{\ms{d}_1},\dots,\ms{pk}_k^{\ms{d}_k})$. Analogously for $\ms{sk}$.

\begin{itemize}
\item Key Generation: $\ms{Gen^\prime}_{cca2}$ is a probabilistic polynomial-time key generation algorithm which takes as input a security parameter $1^n$. $\ms{Gen^\prime}_{cca2}$ proceeds as follows. It calls $\ms{PKE}$'s key generation algorithm $|\Sigma|k$ times obtaining the public-keys $(\ms{pk}_1^1, \ldots, \ms{pk}_1^{|\Sigma|},\ldots, \ms{pk}_k^1, \ldots, \ms{pk}_k^{|\Sigma|} )$ and the secret-keys $(\ms{sk}_1^1, \ldots, \ms{sk}_1^{|\Sigma|}, \ldots, \ms{sk}_k^1, \ldots, \ms{sk}_k^{|\Sigma|} )$. Outputs $\ms{pk} = (\ms{pk}_1^1, \ldots, \ms{pk}_1^{|\Sigma|},\ldots, \ms{pk}_k^1, \ldots, \ms{pk}_k^{|\Sigma|} )$ and $\ms{sk} = (\ms{sk}_1^1, \ldots, \ms{sk}_1^{|\Sigma|}, \ldots, \ms{sk}_k^1, \ldots, \ms{sk}_k^{|\Sigma|} )$.

\item Encryption: $\ms{Enc^\prime}_{cca2}$ is a probabilistic polynomial-time encryption algorithm which receives as input the public-key $\ms{pk}=(\ms{pk}_1^1, \ldots, \ms{pk}_1^{|\Sigma|},\ldots, \ms{pk}_k^1, \ldots, \ms{pk}_k^{|\Sigma|} )$ and a message $\ms{m}=(\ms{m}_1,\ldots, \ms{m}_k) \in \mc{M}$ and proceeds as follows:
\begin{enumerate}
\item Executes the key generation algorithm of the signature scheme $\ms{SS}$ obtaining a signing key $\ms{dsk}$ and a verification key $\ms{vk}$. Computes $\ms{d}=\ms{ECC}(\ms{vk})$. Let $\ms{d}_i$ denote the $i$-element of $\ms{d}$.
\item Computes $\ms{c}^\prime=\ms{Enc_{Cor}}(\ms{pk}^{\ms{d}},\ms{m})$.
\item Computes the signature $\sigma= \ms{Sign}(\ms{dsk},\ms{c}^\prime)$.
\item Outputs the ciphertext $\mathsf{c}=(\ms{c}^\prime, \ms{vk}, \sigma)$.
\end{enumerate}

\item Decryption: $\ms{Dec^\prime}_{cca2}$ is a deterministic polynomial-time decryption algorithm which takes as input a secret-key  $\ms{sk} = (\ms{sk}_1^1, \ldots, \ms{sk}_1^{|\Sigma|}, \ldots, \ms{sk}_k^1, \ldots, \ms{sk}_k^{|\Sigma|} )$ and a ciphertext $\ms{c}=(\ms{c}^\prime, \ms{vk}, \sigma)$ and proceeds as follows:

\begin{enumerate}
\item If $\ms{Ver}(\ms{vk},\ms{c}^\prime, \sigma)=0$, it outputs $\perp$ and halts. Otherwise, it performs the following steps.
\item Compute $\ms{d}=\ms{ECC}(\ms{vk})$.
\item Compute $\ms{m}=\ms{Dec_{Cor}}(\ms{sk}^{\ms{d}}, \ms{c})$ and output $\ms{m}$.
\end{enumerate}

\end{itemize}

\begin{theorem}\label{the_ccacor}
Given that $\mathsf{SS}$ is an one-time strongly unforgeable signature scheme and that $\mathsf{PKE_{Cor}}$ is secure and $\tau$-verifiable, the public-key encryption scheme $\mathsf{PKE'}_{cca2}$ is IND-CCA2 secure.
\end{theorem}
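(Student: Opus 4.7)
The plan is to mimic the two-game hybrid structure from the proof of Theorem \ref{the_cca}, replacing the bitwise ``derived key'' indexing (which gave a single position where $\ms{vk}$ and $\ms{vk}^\ast$ differed) by the ECC-based indexing, which will give at least $\tau$ positions of disagreement. Concretely, let Game 1 be the IND-CCA2 experiment for $\ms{PKE}'_{cca2}$, and let Game 2 be the modification in which the challenge signature keys $(\ms{vk}^\ast,\ms{dsk}^\ast)$ are sampled up front at the start, and any decryption query of the form $\ms{c}=(\ms{c}^\prime,\ms{vk},\sigma)$ with $\ms{vk}=\ms{vk}^\ast$ is answered with $\perp$.

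The indistinguishability of Game 1 and Game 2 goes through essentially verbatim as Lemma \ref{lem_g1g2}: any adversary noticing the change must submit a valid query whose verification key coincides with $\ms{vk}^\ast$ but whose ciphertext/signature pair differs from the challenge, and this directly yields a one-time strong forgery against $\ms{SS}$. The only change is syntactic (the ciphertext now contains a $\ms{c}^\prime$ produced by $\ms{Enc_{Cor}}$ rather than $\ms{Enc}_k$).

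The main work is bounding $\ms{Adv}_{\ms{Game 2}}(\mc{A})$ by the IND-CPA advantage of an adversary $\mc{B}$ against $\ms{PKE_{Cor}}$. $\mc{B}$ receives $\ms{pk}^\ast=(\ms{pk}_1^\ast,\dots,\ms{pk}_k^\ast)$ from its IND-CPA challenger, generates $(\ms{vk}^\ast,\ms{dsk}^\ast)\la\ms{Gen}(1^n)$, computes $\ms{d}^\ast=\ms{ECC}(\ms{vk}^\ast)$, and embeds $\ms{pk}_i^\ast$ at slot $(i,\ms{d}^\ast_i)$ of $\ms{pk}$ while sampling all other slots $(i,j)$ with $j\ne \ms{d}^\ast_i$ itself (keeping the corresponding $\ms{sk}_i^j$). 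The resulting $\ms{pk}$ is distributed exactly as in Game 2. On a decryption query $\ms{c}=(\ms{c}^\prime,\ms{vk},\sigma)$ with $\ms{vk}\ne \ms{vk}^\ast$, $\mc{B}$ first checks the signature, computes $\ms{d}=\ms{ECC}(\ms{vk})$, and exploits the minimum-distance-$\tau$ property of $\ms{ECC}$: the set $T=\{i:\ms{d}_i\ne \ms{d}^\ast_i\}$ has size at least $\tau$, and for every $i\in T$ the key $\ms{sk}_i^{\ms{d}_i}$ was generated by $\mc{B}$ itself, so $\mc{B}$ can invoke $\ms{Verify}(\ms{c}^\prime,\ms{pk}^{\ms{d}},T,\ms{sk}_T^{\ms{d}})$. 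If this rejects, $\mc{B}$ outputs $\perp$; by $\tau$-verifiability the honest $\ms{Dec_{Cor}}$ would also output $\perp$, matching Game 2. If it accepts, $\mc{B}$ decrypts the $\tau$ components indexed by $T$ with its known secret keys to obtain $\tilde{\ms{m}}_{t_1},\dots,\tilde{\ms{m}}_{t_\tau}$, and invokes the $\tau$-correlation recovery procedure (built into the definition of $\mc{M}_{\ms{Cor}}$) to assemble the full tuple $\tilde{\ms{m}}$ and recover $\ms{m}=\ms{D}(\tilde{\ms{m}})$; again $\tau$-verifiability guarantees this agrees with Game 2's answer.

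Once the simulation of decryption queries is in place, the challenge phase is straightforward: $\mc{B}$ forwards $(\ms{m}_0,\ms{m}_1)$ to its IND-CPA challenger, receives $\ms{c}^{\ast\prime}$, signs it as $\sigma=\ms{Sign}(\ms{dsk}^\ast,\ms{c}^{\ast\prime})$, and returns $(\ms{c}^{\ast\prime},\ms{vk}^\ast,\sigma)$, which is distributed exactly like the Game 2 challenge ciphertext. Post-challenge decryption queries are handled the same way (the constraint $\ms{c}\ne \ms{c}^\ast$ together with $\ms{vk}\ne \ms{vk}^\ast$ ensures we never need to decrypt at the embedded slots). Outputting $\mc{A}$'s bit then gives $\ms{Adv}_{\ms{IND\text{-}CPA}}(\mc{B})=\ms{Adv}_{\ms{Game 2}}(\mc{A})$, and combining with Lemma-\ref{lem_g1g2}-style indistinguishability finishes the proof. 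The main obstacle is the correctness of the simulated decryption oracle, where one must carefully verify that $\tau$-verifiability plus $\tau$-correlation together let $\mc{B}$ reproduce $\ms{Dec_{Cor}}$'s output using only the $\tau$ secret keys it knows; everything else is a routine adaptation of Theorem \ref{the_cca}.
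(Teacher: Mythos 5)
Your proposal is correct and follows essentially the same route as the paper's own proof: the identical two-game structure with the forgery reduction for Game 1 versus Game 2, and a reduction to the IND-CPA security of $\ms{PKE_{Cor}}$ that embeds the challenge public key at the slots indexed by $\ms{ECC}(\ms{vk}^\ast)$ and uses the minimum-distance-$\tau$ property to obtain a $\tau$-set $T$ of known secret keys for answering decryption queries via $\tau$-verifiability and $\tau$-correlation recovery. No substantive differences from the paper's argument.
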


\begin{proof}
The proof is almost identical to the proof of theorem~\ref{the_cca}. Denote by $\mathcal{A}$ the IND-CCA2 adversary. Consider the following two of games.

\begin{itemize}
	\item \bf{Game 1} This is the IND-CCA2 game.
	\item \bf{Game 2} Same as game 1, except that the signature-keys $(\ms{vk}^\ast,\ms{dsk}^\ast)$ that are used for the challenge-ciphertext $\ms{c}^\ast$ are generated before the interaction with $\mc{A}$ starts. Further, game 2 terminates and outputs $\perp$ if $\mc{A}$ sends a decryption query with $\ms{c} = (\ms{c}^\prime,\ms{vk},\sigma)$ with $\ms{vk} = \ms{vk}^\ast$.
\end{itemize}

Again, we will split the proof of Theorem \ref{the_ccacor} in two lemmata.

\begin{lemma}\label{lem2_g1g2}
From $\mc{A}$'s view, game 1 and game 2 are computationally indistinguishable, given that $\ms{SS}$ is an existentially unforgeable one-time signature-scheme.
\end{lemma}

We omit the proof, since it is identical to the proof of lemma \ref{lem_g1g2}.

\begin{lemma}\label{lem2_g2}
It holds that $\ms{Adv}_{\ms{Game 2}}(\mc{A})$ is negligible in the security parameter, given that $\ms{PKE_{Cor}}$ is verifiable IND-CPA secure correlated public-key encryption scheme.
\end{lemma}

\begin{proof}
We proceed as in the proof of Lemma \ref{lem_g2}. Assume that $\ms{Adv}_{\ms{Game 2}}(\mc{A}) \geq \epsilon$ for some non-negligible $\epsilon$. We will now construct an IND-CPA adversary $\mc{B}$ against $\ms{PKE_{Cor}}$ that breaks the IND-CPA security of $\ms{PKE_{Cor}}$ with advantage $\epsilon$. Again, instead of generating $\ms{pk}$ like game 2, $\mc{B}$ will construct $\ms{pk}$ using the public-key $\ms{pk}^\prime$ provided by the IND-CPA experiment. Let $\ms{d} = \ms{ECC}(\ms{vk}^\ast)$. $\mc{B}$ sets $\ms{pk}^{\ms{d}} = \ms{pk}^\ast$. All remaining components $\ms{pk}_i^j$ of $\ms{pk}$ are generated by $(\ms{pk}_i^j,\ms{sk}_i^j) \lar \ms{Gen}(1^n)$, for which $\mc{B}$ stores the corresponding $\ms{sk}_i^j$. Obviously, the $\ms{pk}$ generated by $\mc{B}$ is identically distributed to the $\ms{pk}$ generated by game 2, as in both cases all components are $\ms{pk}_i^j$ are generated independently by the key-generation algorithm $\ms{Gen}$ of $\ms{PKE}$. Whenever $\mc{A}$ sends a decryption query with $\ms{vk} \neq \ms{vk}^\ast$, $\mc{B}$ does the following. Let $\ms{d} = \ms{ECC}(\ms{vk})$ and $\ms{d}^\ast = \ms{ECC}(\ms{vk}^\ast)$. Since the two codewords $\ms{d}$ and $\ms{d}^\ast$ are distinct and the code $\ms{ECC}$ has minimum-distance $\tau$, there exist a $\tau$-set of indices $T \subseteq \{1,\dots,k\}$ such that it holds for all $t \in T$ that $\ms{d}_t \neq \ms{d}_t^\ast$. Thus, the public-keys $\ms{pk}_t^{\ms{d}_t}$, for $t \in T$ were generated by $\mc{B}$ and it thus knows the corresponding secret-keys $\ms{sk}_t^{\ms{d}_t}$. $\mc{B}$ checks if $\ms{Verify}(\ms{c}^\prime,\ms{pk}^{\ms{d}},T,\ms{sk}^{\ms{d}}_T) = 1$ holds, i.e. if $\ms{c}^\prime$ is a valid ciphertext for $\ms{PKE_{Cor}}$ under the public-key $\ms{pk}^{\ms{d}}$. If so, $\mc{B}$ decrypts $\tilde{\ms{m}}_T = (\tilde{\ms{m}}_{t} | t \in T) = (\ms{Dec}(\ms{sk}_t^{\ms{d}_t},\ms{c}^\prime_t) | t \in T)$. Since the plaintext-space $\mc{M}_{\ms{Cor}}$ is $\tau$-correlated, $\mc{B}$ can efficiently recover the whole message $\tilde{\ms{m}}$ from the $\tau$-submessage $\tilde{\ms{m}}_T$. Finally, $\mc{B}$ decodes $\ms{m} = \ms{D}(\tilde{\ms{m}})$ to recover the message $\ms{m}$ and outputs $\ms{m}$ to $\mc{A}$. Observe that the verifiability-property of $\ms{PKE_{Cor}}$ holds regardless of the subset $T$ used to verify. Thus, from $\mc{A}$'s view the decryption-oracle behaves identically in game 2 and in $\mc{B}$'s simulation.

Finally, when $\mc{A}$ sends its challenge messages $\ms{m}_0$ and $\ms{m}_1$, $\mc{B}$ forwards $\ms{m}_0$ and $\ms{m}_1$ to the IND-CPA experiment for $\ms{PKE_{Cor}}$ and receives a challenge-ciphertext $\ms{c}^{\ast \prime}$. $\mc{B}$ then computes $\sigma = \ms{Sign}(\ms{sk}^\ast,\ms{c}^{\ast \prime})$ and outputs the challenge-ciphertext $\ms{c}^\prime = (\ms{c}^{\ast \prime},\ms{vk}^\ast,\sigma)$ to $\mc{A}$. When $\mc{A}$ generates an output, $\mc{B}$ outputs whatever $\mc{A}$ outputs.

Putting it all together, $\mc{A}$'S views are identically distributed in game 2 and $\mc{B}$'s simulation. Therefore, it holds that $\ms{Adv}_{\ms{IND-CPA}}(\mc{B}) = \ms{Adv}_{\ms{game 2}}(\mc{A}) \geq \epsilon$. Thus, $\mc{B}$ breaks the IND-CPA security of $\ms{PKE_{Cor}}$ with non-negligible advantage $\epsilon$, contradicting the assumption.
\end{proof}

Plugging Lemma \ref{lem2_g1g2} and Lemma \ref{lem2_g2} establish that any PPT IND-CCA2 adversary $\mc{A}$ has at most negligible advantage in winning the IND-CCA2 experiment for the scheme $\ms{PKE^\prime_{cca2}}$.
\end{proof}

\subsection{Verifiable Correlated PKE based on the McEliece Scheme}

We can use a modified version of the scheme presented in Section~\ref{sec:rme} to instantiate a $\tau$-correlated verifiable IND-CPA secure McEliece scheme $\ms{PKE}_{McE,Cor}$. A corresponding IND-CCA2 secure scheme is immediately implied by the construction in  Section~\ref{sec_ccacor}. As plaintext-space $\mc{M}_{\ms{Cor}}$ for $\ms{PKE}_{McE,Cor}$, we choose the set of all tuples $(\ms{s} | \ms{y}_1,\dots,\ms{s} | \ms{y}_k)$, where $\ms{s}$ is a $n$-bit string and $(\ms{y}_1,\dots,\ms{y}_k)$ is a codeword from code $\ms{C}$ that can efficiently correct $k - \tau$ erasures. Clearly, $\mc{M}_{\ms{Cor}}$ is $\tau$-correlated. Let $\ms{E_C}$ be the encoding-function of $\ms{C}$ and $\ms{D}_{\ms{C}}$ the decoding-function of $\ms{C}$. The randomized encoding-function $\ms{E_{McE,Cor}}$ used by $\ms{PKE_{McE,Cor}}$ proceeds as follows. Given a message $\ms{m}$ and random coins $\ms{s}$, it first computes $(\ms{y}_1,\dots,\ms{y}_k) = \ms{E_C}(\ms{m})$ and outputs $(\ms{s} | \ms{y}_1,\dots,\ms{s} | \ms{y}_k)$. The decoding-function $\ms{D_{McE,Cor}}$ takes a tuple $(\ms{s} | \ms{y}_1,\dots,\ms{s} | \ms{y}_k)$ and outputs $\ms{D_C}(\ms{y}_1,\dots,\ms{y}_k)$. Like in the scheme of Section~\ref{sec:rme}, the underlying OW-CPA secure encryption-scheme $\ms{PKE}$ is textbook-McEliece.

The $\tau$-correlatedness of $\ms{PKE_{McE,Cor}}$ follows directly by the construction of $\mc{M}_{\ms{Cor}}$, $\ms{E_{Mce, Cor}}$ and $\ms{D_{Mce, Cor}}$. It remains to show verifiability and IND-CPA security of the scheme. The $\ms{Verify_{McE}}$-algorithm takes a ciphertext $\ms{c} = (\ms{c}_1,\dots,\ms{c}_k)$, a public-key $\ms{pk}$, an a partial secret-key $\ms{sk}_T$ (for a $\tau$-sized index-set $T$) and proceeds as follows. First, it decrypts the components of $\ms{c}$ at the indices of $T$, i.e. it computes $\ms{x}_t = \ms{Dec_{McE}}(\ms{sk}_t,\ms{c}_t)$ for $t \in T$. Then, it checks whether all $\ms{x}_t$ are of the form $\ms{x}_t = \ms{s} | \ms{y}_t$ for the same string $\ms{s}$. If not, it stops and outputs 0. Next, it constructs a vector $\tilde{\ms{y}} \in \Sigma^k$ with $\tilde{\ms{y}}_i = \ms{y}_i$ for $i \in T$ and $\tilde{\ms{y}}_i = \perp$ (erasure) for $i \notin T$. $\ms{Verify}$ then runs the erasure-correction algorithm of $\ms{C}$ on $\tilde{\ms{y}}$. If the erasure-correction fails, it stops and outputs 0. Otherwise let $\ms{y} = (\ms{y}_1,\dots,\ms{y}_k)$ be the corrected vector returned by the erasure-correction. Then, $\ms{Verify}$ sets $\ms{x} = (\ms{s} | \ms{y}_1,\dots,\ms{s} | \ms{y}_k)$. Let $\mb{G}_1,\dots,\mb{G}_k$ be the generator-matrices given in $\ms{pk}_1,\dots,\ms{pk}_k$. Finally, $\ms{Verify}$ checks whether all the vectors $\ms{c}_j \oplus \ms{x} \mb{G}_j$, for $j = 1,\dots,k$, have Hamming-weight smaller than $t$. If so, it outputs 1, otherwise 0. Clearly, if $\ms{Verify_{McE}}$ outputs 1, then the ciphertext-components $\ms{c}_j$ of $\ms{c}$ are valid McEliece encryptions.

The IND-CPA-security is proven analogously to Lemma \ref{lem_ind}. First, the McEliece generator-matrices $\mb{G}_i$ are replaced by random matrices $\mb{R}_i$, then, using the LPNDP-assumption, vectors of the form $\ms{s} \ms{R}_i \oplus \ms{e}_i$ are replaced by uniformly random vectors $\ms{u}_i$. Likewise, after this transformation the adversarial advantage is 0.

\section{Acknowledgments} 
We would like to thank Edoardo Persichetti for comments on the definition of $k$-repetition PKE in an earlier version of this work. We  would like to thank the anonymous referees who provided us with valuable feedback that greatly improved the quality of this paper.


\begin{thebibliography}{99}


\bibitem{BJMM12}
A. Becker, A. Joux, A. May, A. Meurer. Decoding Random Binary Linear Codes in $2^{n/20}$: How $1 + 1 = 0$ Improves Information Set Decoding. \emph{EUROCRYPT 2012}. pp. 520--536.

\bibitem{BMT78}
E.R. Berlekamp, R.J. McEliece, H.C.A van Tilborg. On the Inherent Intractability of Certain Coding Problems.
\emph{IEEE Trans. Inf. Theory}. Vol. 24, pp. 384--386. 1978.

\bibitem{BLP08}
D. J. Bernstein,T. Lange, C. Peters. Attacking and Defending the McEliece Cryptosystem. \emph{PQCrypto 2008}. pp. 31--46.

\bibitem{BLP11}
D. J. Bernstein, T. Lange, C. Peters. Smaller Decoding Exponents: Ball-Collision Decoding. \emph{CRYPTO 2011}. pp. 743-760. 2011.

\bibitem{BS08}
B. Biswas, N. Sendrier. McEliece Cryptosystem Implementation: Theory and Practice. \emph{PQCrypto}. pp. 47-62. 2008.

\bibitem{CHK04}
R. Canetti, S. Halevi, J. Katz. Chosen-Ciphertext Security from Identity-Based Encryption. \emph{EUROCRYPT 2004}. pp. 207--222.

\bibitem{CC98}
A. Canteaut, F. Chabaud. A New Algorithm for Finding Minimum-weight Words in a Linear Code: Application to Primitive Narrow-sense BCH Codes of Length 511. \emph{IEEE Trans. Inf. Theory}. Vol. 44(1), pp. 367--378. 1998.

\bibitem{CFS01}
N. Courtois, M. Finiasz, N. Sendrier. How to Achieve a McEliece Digital Signature Scheme. \emph{ASIACRYPT 2001}. pp. 157--174.

\bibitem{CS98}
R. Cramer, V. Shoup. A Practical Public Key Cryptosystem Provably Secure Against Adaptive Chosen Ciphertext Attack. \emph{CRYPTO 1998}. pp. 13--25.

\bibitem{DDN00}
D. Dolev, C. Dwork, M. Naor. Non-malleable Cryptography. \emph{SIAM J. Comput}. Vol 30(2), pp. 391--437. 2000.

\bibitem{DMN09}
R. Dowsley, J. M\"{u}ller-Quade, A. C. A. Nascimento. A CCA2 Secure Public Key Encryption Scheme Based on the McEliece Assumptions in the Standard Model. \emph{CT-RSA 2009}. pp. 240--251.

\bibitem{FGOPT11}
J.-C. Faug{\`e}re, V. Gauthier, A. Otmani, L. Perret, J.-P. Tillich. A Distinguisher for High Rate McEliece Cryptosystems. \emph{Information Theory Workshop (ITW), 2011 IEEE}. pp. 282-286, 2011

\bibitem{FOPT10}
J.-C. Faug{\`e}re, A. Otmani, L. Perret, J.-P. Tillich. Algebraic Cryptanalysis of McEliece Variants with Compact Keys. \emph{EUROCRYPT 2010}. pp. 279-298. 2010.

\bibitem{FS09}
M. Finiasz and N. Sendrier. Security Bounds for the Design of Code-based Cryptosystems.
\emph{Asiacrypt 2009}, LNCS 5912, pp. 88--105.

\bibitem{GM84}
S. Goldwasser, S. Micali. Probabilistic Encryption. \emph{J. Comput. Syst. Sci}. Vol 28(2), pp. 270--299. 1984.

\bibitem{GV08}
S. Goldwasser, V. Vaikuntanathan. Correlation-secure Trapdoor Functions from Lattices. Manuscript, 2008.

\bibitem{HK07}
D. Hofheinz, E. Kiltz. Secure Hybrid Encryption from Weakened Key Encapsulation. \emph{CRYPTO 2007}. pp. 553--571. 

\bibitem{KS06}
J. Katz, J. S. Shin: Parallel and Concurrent Security of the HB and HB+ Protocols. \emph{EUROCRYPT 2006}. pp. 73--87.

\bibitem{KI01}
K. Kobara and H. Imai. Semantically Secure McEliece Public-Key Cryptosystems Conversions for McEliece PKC, LNCS 1992, Springer, 2001.

\bibitem{La79}
L. Lamport. Constructing Digital Signatures from One-Way Functions, \emph{SRI intl. CSL-98}. Oct. 1979.

\bibitem{LB88}
P. J. Lee and E. F. Brickell. An Observation on the Security of McEliece�s Public-key Cryptosystem. \emph{EUROCRYPT 1988}, pages 275�280, 1988.

\bibitem{Leo88}
J. S. Leon. A Probabilistic Algorithm for Computing Minimum Weights of Large Error-correcting Codes. \emph{IEEE Transactions on Information Theory}, 34(5):1354 � 1359, 1988.

\bibitem{Lin03}
Y. Lindell. A Simpler Construction of CCA2-Secure Public-Key Encryption under General Assumptions. \emph{EUROCRYPT 2003}. pp. 241--254.

\bibitem{LS01} P. Loidreau, N. Sendrier. Weak keys in McEliece Public-key Cryptosystem. \emph{IEEE Transactions on Information Theory}. pp. 1207--1212. 2001.

\bibitem{McE78}
R.J. McEliece: A Public-Key Cryptosystem Based on Algebraic Coding Theory. \emph{Deep Space Network Progress Report}. 1978.

\bibitem{MMT11}
A. May, A. Meurer, E. Thomae. Decoding Random Linear Codes in $\tilde{\mathcal{O}}(2^{0.054n})$. \emph{ASIACRYPT 2011}. pp. 107--124.

\bibitem{NY89}
M. Naor and M. Yung. Universal One-Way Hash Functions and their Cryptographic Applications. \emph{21st STOC}. pp. 33--43. 1989.

\bibitem{NIKM07}
R. Nojima, H. Imai, K. Kobara, K. Morozov, Semantic Security for the McEliece Cryptosystem without Random Oracles. \emph{International Workshop on Coding and Cryptography (WCC) 2007}. pp. 257--268. Journal version in \emph{Designs, Codes and Cryptography}. Vol. 49, No. 1-3, pp. 289--305. 2008. 

\bibitem{PW08}
C. Peikert, B. Waters. Lossy Trapdoor Functions and Their Applications. \emph{STOC 2008}. pp. 187--196.

\bibitem{Per12} E. Persichetti, Personal Communication. 

\bibitem{RS91}
C. Rackoff, D. R. Simon: Non-Interactive Zero-Knowledge Proof of Knowledge and Chosen Ciphertext Attack. \emph{CRYPTO 1991}. pp. 433--444.

\bibitem{Reg05}
O. Regev. On Lattices, Learning with Errors, Random Linear Codes, and Cryptography. \emph{STOC 2005}. pp. 84--93.

\bibitem{RS09}
A. Rosen, G. Segev. Chosen-Ciphertext Security via Correlated Products.  \emph{TCC 2009}. pp. 419--436.

\bibitem{Sah99}
A. Sahai. Non-Malleable Non-Interactive Zero Knowledge and Adaptive Chosen-
Ciphertext Security. In \emph{40th FOCS}. pp. 543--553.

\bibitem{Sen00}
N. Sendrier. Finding the Permutation Between Equivalent Linear Codes: The Support Splitting Algorithm. \emph{IEEE Trans. Inf. Theory}. Vol. 46(4), pp.1193--1203. 2000.

\bibitem{Sen10}
N. Sendrier. On the Use of Structured Codes in Code Based Cryptography. \emph{Coding Theory and Cryptography III, The Royal Flemish Academy of Belgium for
Science and the Arts}. 2010.

\bibitem{Ste89}
J. Stern. A Method for Finding Codewords of Small Weight. \emph{3rd International Colloquium on Coding Theory and Applications}, pp. 106--113, 1989. 

\bibitem{ST2008} F. Strenzke, E. Tews, H. G. Molter, R. Overbeck, A. Shoufan. Side Channels in the McEliece PKC. \emph{PQCrypto 2008}, pp. 216-229.

\end{thebibliography}
\end{document}